\definecolor{burntorange}{rgb}{0.8, 0.33, 0.0}
\newtheorem{Theorem}{Theorem}
\newtheorem{Corollary}{Corollary}
\newtheorem{Definition}{Definition}
\newtheorem{Remark}{Remark}
\newtheorem{Lemma}{Lemma}
\newtheorem{Example}{Example}
\begin{document}
\title{Symmetrical Z-Complementary Code Sets for Optimal Training in Generalized Spatial Modulation}

\author{Yajing Zhou,~\IEEEmembership{Student Member,~IEEE,}
\thanks{Y. Zhou is with the School of Information Science and Technology, Southwest Jiaotong University,
Chengdu, 610031, China  (e-mail:
zhouyajing@my.swjtu.edu.cn).}
Zhengchun Zhou,~\IEEEmembership{Member,~IEEE,}\thanks{ Z. Zhou and Y. Yang are with the School of Mathematics, Southwest Jiaotong University,
Chengdu, 610031, China (e-mail:
zzc@swjtu.edu.cn, yang\_data@swjtu.edu.cn).}
Zilong Liu,~\IEEEmembership{Senior Member,~IEEE,}
\thanks{Z. Liu is with the School of Computer Science and Electronics Engineering, University of Essex,
Colchester, CO4 3SQ, United Kingdom  (e-mail:
zilong.liu@essex.ac.uk).}
Yang Yang,~\IEEEmembership{Member,~IEEE,}
Ping Yang, ~\IEEEmembership{Senior Member,~IEEE,}\thanks{P. Yang is with the National Key Laboratory on Communications, University of Electronic Science and Technology of China, Chengdu, 611731, China
(e-mail: yang.ping@uestc.edu.cn).}
Pingzhi Fan,~\IEEEmembership{Fellow,~IEEE,}\thanks{P. Fan is with the Institute of Mobile Communications, Southwest Jiaotong
University, Chengdu, 611756, China. E-mail: pzfan@swjtu.edu.cn.}}
\maketitle

\date{}
\begin{abstract}
This paper introduces a novel class of code sets, called ``symmetrical Z-complementary code sets'', whose aperiodic auto- and cross- correlation sums exhibit zero-correlation zones at both the front-end and tail-end of the entire correlation window. Two constructions of (optimal) symmetrical Z-complementary code sets based on generalized Boolean functions are presented. Furthermore, we apply symmetrical Z-complementary code sets to design optimal training sequences for broadband generalized spatial modulation systems over frequency-selective channels.

\textbf{Keywords:} Complementary code set \and channel estimation \and training sequence design \and generalized spatial modulation \and frequency-selective channels.
\end{abstract}

\section{Introduction}

This paper is an extension of an emerging concept called ``cross Z-complementa- ry pair (CZCP)'' \cite{Liu-Yang-20} which has been proposed for optimal training in spatial modulation (SM) systems. Unlike \cite{Liu-Yang-20} focusing on training design for SM  in which only one transmit antenna is activated at each symbol slot, we aim to attain optimal training in generalized spatial modulation (GSM) systems where two or more transmit antennas may be activated. Towards this objective, we propose and construct a novel class of codes called symmetrical Z-complementary code set (SZCCS). In the sequel, we overview a number of code sets, introduce the concept of GSM, followed by our major contributions of this work.

\subsection{Complementary Code Sets}

Sequence sets with good correlation properties play an instrumental role in almost every communication system. For example, they can be used for  localization \cite{Pezeshki-Calderbank-08}, synchronization, channel estimation \cite{Spasojevic-Georghiades-01,Wang-Abdi-07}, and interference suppression/mitigation in multiuser systems \cite{Chen-Yeh-01,
Liu-Guan-Para-14,Liu-Guan-Chen-14}. An interesting sequence family is called ``complementary code" which was proposed by Tseng and Liu in \cite{Tseng-Liu-72}, where the aperiodic auto-correlation sum of all the constituent sequences equals zero at any nonzero time-shift. A special case of complementary code is Golay complementary pairs (GCPs) which were first found by Golay \cite{Golay-51} and each GCP consists of only two sequences.

Extensive  research attempts have been made concerning the constructions and applications of complementary codes and GCPs. A remarkable work was contributed by Davis and Jedwab \cite{Davis-Jedwab-99} who constructed polyphase GCPs  \cite{Golay-61} using the algebraic tool of generalized Boolean functions. Subsequently, representative constructions of complementary codes were proposed in  \cite{Paterson-00,Schmidt-06,ChenCY-16,ChenCY-17}. To obtain more flexible parameters,`` complementary code'' was extended to Z-complementary code (ZCC) in \cite{Fan-Yuan-Tu-07},
where the aperiodic auto-correlation sum is zero when the time-shift falls into a region around the in-phase position called zero-correlation zone (ZCZ).

Besides, Tseng and Liu studied in \cite{Tseng-Liu-72} ``mutually orthogonal complementary code set (MOCCS)'', which refers to a set of complementary codes with zero aperiodic cross-correlation sums between different complementary codes. Further developments of MOCCSs were reported in  \cite{ R.A-06, Wu-Spasojevi-08, Das-18, Das-SPL-18, Das-19}. However, MOCCS suffers from limited set size which is upper bounded by the number of constituent sequences (i.e., the flock size) in each code. To overcome this weakness, the concept of MOCCS was extended to Z-complementary code sets (ZCCSs) for higher set sizes \cite{Fan-Yuan-Tu-07}.

So far, the existing known constructions of ZCCSs only consider the front-end\footnote{with respect to the entire correlation window.} ZCZ of the aperiodic auto-correlation sums and aperiodic cross-correlation sums. Unlike the state-of-the-art works, we consider a specific subclass of ZCCSs named SZCCSs which exhibit both the front-end ZCZ and tail-end ZCZ properties. In practice, a front-end ZCZ and a tail-end ZCZ have particular interest for mitigating interference with small and large delays, respectively. We will show in Section IV that the properties of SZCCSs are useful for mitigating multipath interference and multi-antenna interference in GSM systems.

\subsection{Generalized Spatial Modulation (GSM)}

SM  is a multiple-input multiple-output (MIMO) technique which enjoys zero inter-channel interference over flat fading channels, low hardware complexity and low energy consumption. In SM, information is transmitted over two parts: 1) spatial dimension, coordinated by antenna indices, and 2) the conventional signal constellation of a modulation scheme \cite{Renzo-Haas-11}.
That is, information bits are conveyed through not only modulation symbols, but also the index of the active  transmit
antenna (TA) \cite{Naras-Rav-Cho-15}.
 Unlike SM system equipped with multiple TA elements but only a single radio-frequency (RF) chain, two or more TAs are activated in GSM at each symbol slot and the specific activated pattern itself conveys useful but implicit information \cite{Wang-Jia-12}. Therefore, GSM can achieve higher
spectral efficiency than SM systems while maintaining major advantages of SM \cite{Fu-Hou-10,Renzo-Haas-14}.  As a result, GSM strikes a flexible trade-off among spectral efficiency, cost of RF chains and energy efficiency by varying the number of RF chains\cite{He-Wang-18,Guo-Zhang-19}. These advantages make GSM a competitive candidate for the next generation wireless networks \cite{Wen-Zheng-19}.

There have been many iterative and non-iterative detector designs for GSM systems \cite{Yang-Xiao-16}. However, most of these detectors, e.g., the tree-search based detector \cite{Xiao-Yang-14} and the near-ML-detector \cite{Lin-Wu-15}, consider narrow-band scenarios where GSM symbols are transmitted over flat-fading channels. This is different from practical wireless channels which may exhibit frequency selective properties.  In recent years, various types of GSM detectors have been proposed for broadband GSM to combat the effects of inter-symbol interference imposed by  frequency-selective fading channels  \cite{Xiao-17,Wang-Liu-18,Zhao-Yang-18,Anand-Guan-20}. For example, a novel soft-decision feedback aided time-domain turbo equalizer based on the minimum mean-square error criterion \cite{Zhao-Yang-18}  was proposed for broadband GSM systems. In \cite{Sugiura-Hanzo-15}, a frequency-domain based turbo equalizer was developed for GSM under dispersive channels. In these works, perfect channel state information at the receiver was assumed. Recently, there appeared some literature on the channel estimators in GSM systems. For example, a channel estimation
scheme based on block pilot pattern and a interpolation method was proposed in \cite{Gong-18-GSM-OFDM}, which is designed for the GSM-orthogonal
frequency-division multiplexing system on high-speed railways. In \cite{Chu-19}, they proposed pilot-assisted and data-aided super-resolution MIMO
channel estimators for GSM-based millimeter-wave systems. Message-passing based blind channel-and-signal estimation and semi-blind channel-and-signal estimation algorithms were developed for massive MIMO systems with GSM \cite{Kuai-20}, which utilize the channel sparsity of the massive MIMO channel and the signal sparsity of GSM. However, to the best of our knowledge, all the existing methods for channel estimation in GSM systems are about the blind and semi-blind estimation algorithms, which usually have high computational complexity, and can not attain optimal estimation in theory.

  A major concern of this work is how to attain optimal estimation of channel state information in broadband GSM systems. A common means of obtaining channel state information is by sending properly designed preamble sequences at the transmitter, followed by correlating the known preamble sequences  at the receiver. It is noted that optimal channel estimation requires preamble sequences with zero nontrivial auto- and cross-correlations. However, the employment of conventional preamble sequences in GSM is not straightforward. Since only a few RF chains whose number is less than that of TAs are activated at each GSM time-slot and hence the transmit signal of GSM is sparse, it is hard to adopt those training sequences which are dense for traditional MIMO systems \cite{Yang-Wu-02,Fragouli-03,Fan-Mow-04}. On the other hand, when the number of RF chains is greater than one, training sequences designed for SM systems \cite{Liu-Yang-20} are also inapplicable.

\subsection{Novelty and Contributions of This Paper}

The main contributions and novelty of this paper are summarized as follows.
\begin{description}
  \item[1)] We introduce a novel class of ZCCSs called SZCCSs, each of which displaying zero tail-end auto/cross-correlation sums symmetrical to that of the front-end ones. More restrictive than the conventional ZCCSs, the design challenge of SZCCSs stems from the additional correlation properties associated to the tail-end ZCZ. We propose two systematic constructions of SZCCS with different set sizes and  sequence lengths using generalized Boolean functions. Especially, the SZCCSs generated from the first construction is optimal with power-of-two lengths, and the SZCCSs of non-power-of-two lengths constructed in the second construction have larger zero-correlation ratio than the former with specific parameters.
  \item[2)] We present a generic training framework for optimal GSM  training over frequency selective channels. The most distinctive feature of the proposed generic training framework (compared that for conventional MIMO systems) is that the training matrix should be sparse owing to the sparsity of the GSM transmit signals.  Based on this framework, we derive the optimal GSM training criteria under least square channel estimator and show that SZCCS plays a pivotal role in the design of optimal GSM training sequences. To the best of our knowledge, this paper is the first to consider the sequence design for the channel estimation in GSM systems. Numerical evaluations indicate that the proposed GSM training sequences lead to minimum channel estimation mean-square error and significantly outperform other classes of sequences (e.g., Zadoff-Chu sequences and binary random sequences) with different settings of activated transmit antennas and signal-to-noise ratios.
\end{description}

\subsection{Organization of This Paper}

The remainder of this paper is organized as follows. Section \ref{sec-pre} introduces some notations, ZCCSs and the mathematical tools used in the paper followed by a sketch of the basic principle of GSM. In Section III, we first present SZCCSs and show an upper bound of the size of SZCCSs. Then, we give two constructions of SZCCSs with different set sizes  based on generalized Boolean functions where one is optimal. In Section IV, first, we present a generic training framework for GSM training over frequency selective channels. Then, we derive the optimal GSM training criteria under LS channel estimator which implies that SZCCS plays an instrumental role in the design of optimal
GSM training sequences, and show some numerical evaluations of the proposed GSM training sequences. Finally, Section V concludes this paper with some remarks.

For readability, we summarize in Table \ref{table-short} all the acronyms which are used in this paper.
\begin{table}[htbp]
  \centering
 \caption{List of acronyms}\label{table-short}
  \begin{tabular}{|c|c|}
  \hline
 Acronyms&Descriptions\\  \hline
       CP & cyclic prefix \\
 CZCP & cross Z-complementary pair \\
  GCP&Golay complementary pair\\
 GSM & generalized spatial modulation \\
  LS &    least square\\
    MIMO &   multiple-input multiple-output \\
  MOCCS & mutually orthogonal complementary code set\\
 MSE&mean-square error\\
  RF &    radio-frequency\\
  SM & spatial modulation \\
  SZCCS &  symmetrical Z-complementary code set \\
   TA &   transmit antenna\\
  ZCC &  Z-complementary code \\
    ZCCS &   Z-complementary code set \\
  ZCZ &   zero-correlation zone \\
    ZP &   zero prefix \\
  \hline
\end{tabular}
\end{table}

\section{ Preliminaries}\label{sec-pre}

\subsection{Notations}

The following notations will be used throughout this paper.
\begin{itemize}
  \item $T^\tau(\mathbf{X})$ denotes the right-cyclic-shift of matrix $\mathbf{X}$ for $\tau$ (non-negative integer) positions over rows;
   \item $\mathbf{X}\|\mathbf{Y}$ denotes the concatenation of matrices $\mathbf{X}$ and $\mathbf{Y}$;
  \item $\mathbf{a}\|\mathbf{b}$ denotes the concatenation of sequences $\mathbf{a}$ and $\mathbf{b}$;
  \item $\mathbf{0}_{m\times n}$ denotes an all-0 matrix of order $m\times n$;
    \item $\lfloor j\rfloor_J$ denotes the modulo $J$ operation of integer $j$;
  \item Denote $\xi_q=exp\left(\frac{2\pi \sqrt{-1}}{q}\right)$, $q$ is a positive integer;
  \item Denote $\mathbb{Z}_q=\{0,1,...,q-1\}$ as the set of integers modulo $q$, where $q$ is a positive integer;
  \item $\mathcal{A}_q=\{w_q^0,w_q^1,...,w_q^{q-1}\}$ denotes the set over $q$ complex roots of unity.
\end{itemize}

Let $\mathbf{a}=(a(0),a(1),\cdots,a(L-1))$ and $\mathbf{b}=(b(0),b(1),\cdots,b(L-1))$ be two complex-valued sequences of length $L$.  The aperiodic cross-correlation function  between $\mathbf{a}$ and $\mathbf{b}$ at a time shift $u$ is defined by
\begin{eqnarray*}
 \rho_{\mathbf{a,b}}(u) =\left\{\begin{array}{ll}
\sum_{i=0}^{L-1-u}a(i)b^*(i+u), & 0\leq u\leq L-1; \\
\sum_{i=0}^{L-1+u}a(i-u)b^*(i), &-(L-1)\leq u\leq-1; \\
 0, & |u|\geq L.
   \end{array}\right.
\end{eqnarray*}
It is easy to verify that
\begin{eqnarray}\label{eq-u}
\rho_{\mathbf{a,b}}(u)=\rho^*_{\mathbf{b,a}}(-u).
\end{eqnarray}
If $\mathbf{a=b}$, $\rho_{\mathbf{a,b}}$ is called the aperiodic auto-correlation function, denoted by $\rho_\mathbf{a}$ for simplicity.

Also, denote by $\phi_{\mathbf{a,b}}(u)$ the periodic cross-correlation between $\mathbf{a}$ and $\mathbf{b}$, i.e., $$\phi_{\mathbf{a,b}}(u)=\sum_{i=0}^{L-1}a(i)b^*(\lfloor i+u\rfloor_L).$$

In particular, let $\mathcal{A}=\{\mathbf{a}_1,\mathbf{a}_2,...,\mathbf{a}_M\}$ and $\mathcal{B}=\{\mathbf{b}_1,\mathbf{b}_2,...,\mathbf{b}_M\}$ be two sequence sets of size $M$ and length $L$.  The aperiodic cross-correlation function between $\mathcal{A}$ and $\mathcal{B}$ at a time shift $u$ is defined by
$
 C_{\mathcal{A,B}}(u) =\sum_{m=1}^M\rho_{\mathbf{a}_m,\mathbf{b}_m}(u).
$
Similarly, if $\mathcal{A=B}$, $C_{\mathcal{A,B}}$ is called the aperiodic auto-correlation function of sequence set $\mathcal{A}$, denoted by  $C_\mathcal{A}$ for simplicity.
\subsection{Generalized Boolean Functions}

Let $q$ be a positive integer,
for $\mathbf{x}=(x_1,x_2,\cdots,x_m)\in\mathbb{Z}_2^m$, a generalized Boolean function $f(\mathbf{x})$ is defined as a mapping $f$ from $\{0,1\}^m$ to $\mathbb{Z}_q$.
Given $f(\mathbf{x})$, define
\begin{eqnarray}\label{eq-GBF}
  \mathbf{f}=(f(0),f(1),...,
  f(2^{m}-1)),
\end{eqnarray}
where
$f(i)=f(i_1,i_2,\cdots,i_{m})$, and $(i_1,i_2,\cdots,i_{m})$ is the binary representation of $i=\sum_{k=1}^mi_k2^{k-1}$ with $i_{m}$ denoting the most significant bit.

In this paper, we consider truncated versions of the sequence $\mathbf{f}$ of Eq. (\ref{eq-GBF}). Specifically,  let  $\mathbf{f}^{(L)}$ be a sequence of length $L$ obtained from $\mathbf{f}$ by ignoring the last $2^m-L$ elements of the sequence $\mathbf{f}$. That is, $\mathbf{f}^{(L)}=(f(0),f(1),
\cdots,f(L-1))$ is a sequence of length $L$.  Let $\xi_q=\exp(2\pi\sqrt{-1}/q)$ be a primitive $q$-th complex root of unity. One can naturally associate a complex-valued sequence $\psi(\mathbf{f}^{(L)})$ of length $L$ with $\mathbf{f}^{(L)}$ as
\begin{eqnarray}\label{eq-fL}
  \psi(\mathbf{f}^{(L)}) &:=&(\xi_q^{f(0)},\xi_q^{f(1)},
  \cdots,\xi_q^{f(L-1)}).
\end{eqnarray}
From now on, whenever the context is clear, we ignore the superscript of $\mathbf{f}^{(L)}$ unless the sequence length is specified.

\subsection{Introduction to ZCCS}
\begin{Definition}
Let ${\mathcal{A}}=\{\mathbf{a}_{m}\}_{m=1}^{M}$ be a set of $M$ complex-valued sequences of length $L$.  It is said to be a (aperiodic) $(M,L,Z)$-ZCC of size $M$ if $C_{\mathcal{A}}(u)=0$ for any  $0<|u|\leq Z$ where $Z$ is a positive integer with $1\leq Z\leq L-1$. In particular, when $Z=L-1$, the set is called a (aperiodic) complementary code, and when $M=2$ it is called a (aperiodic) GCP.
\end{Definition}

The following lemma shows a construction of GCPs, which will be used in the sequel.
\begin{Lemma}[\textbf{Corollary 11} of \cite{Paterson-00}, \textbf{Theorem 3.3} of \cite{R.A-08}] \label{lem-GCP}
Let $q$ be an even integer and $m$ be a positive integer. Let
\begin{align*}
a(\mathbf{x}) =& \frac{q}{2}\sum_{k=1}^{m-1}x_{\pi(k)}x_{\pi(k+1)}+\sum_{k=1}^mc_kx_k+c, \\
b(\mathbf{x}) =& a(\mathbf{x})+ \frac{q}{2}x_{\pi(1)},\\
c(\mathbf{x}) =& a(\mathbf{x})+ \frac{q}{2}x_{m},
\end{align*}
where $\pi$ is a permutation of $\{1,2,\cdots,m\}$ and $\mathbf{x}\in\mathbb{Z}_2^m,c_k,c\in\mathbb{Z}_q$. Then $(\psi(\mathbf{a}),\psi(\mathbf{b}))$ and $(\psi(\mathbf{a}),\psi(\mathbf{c}))$ are GCPs of length $2^m$.
\end{Lemma}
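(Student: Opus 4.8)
The plan is to reduce the Golay complementary pair (GCP) property to the vanishing of a single exponential sum, and then to kill that sum by exhibiting a sign-reversing, fixed-point-free involution on the relevant index set; this mirrors the classical Davis--Jedwab argument. I would prove the assertion for $(\psi(\mathbf{a}),\psi(\mathbf{b}))$ in detail and obtain $(\psi(\mathbf{a}),\psi(\mathbf{c}))$ by the same scheme.

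First I would unwind the definition. By definition $(\psi(\mathbf{a}),\psi(\mathbf{b}))$ is a GCP of length $L=2^{m}$ precisely when $\rho_{\psi(\mathbf{a})}(u)+\rho_{\psi(\mathbf{b})}(u)=0$ for every $u$ with $0<|u|\le L-1$; by the conjugate-symmetry relation $\rho_{\mathbf{a},\mathbf{b}}(u)=\rho^{*}_{\mathbf{b},\mathbf{a}}(-u)$ it is enough to treat $1\le u\le L-1$, and the correlation vanishes for $|u|\ge L$. Since $q$ is even, $\xi_{q}^{q/2}=-1$, and because $b(\mathbf{x})-a(\mathbf{x})=(q/2)\,x_{\pi(1)}$,
\[
\rho_{\psi(\mathbf{a})}(u)+\rho_{\psi(\mathbf{b})}(u)=\sum_{i=0}^{L-1-u}\xi_{q}^{\,a(i)-a(i+u)}\Bigl(1+(-1)^{\,i_{\pi(1)}-(i+u)_{\pi(1)}}\Bigr),
\]
where $i_{\pi(1)}$ is the $\pi(1)$-th coordinate of the binary vector $(i_{1},\dots,i_{m})$ representing $i$. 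The bracketed factor equals $2$ when the $\pi(1)$-th bits of $i$ and $i+u$ coincide and $0$ otherwise, so the claim reduces to
\[
S(u):=\sum_{i\in I_{u}}\xi_{q}^{\,a(i)-a(i+u)}=0,\qquad I_{u}=\bigl\{\,i:0\le i\le L-1-u,\ i_{\pi(1)}=(i+u)_{\pi(1)}\,\bigr\}.
\]

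The core step is to construct the involution on $I_{u}$. For $i\in I_{u}$, let $\kappa=\kappa(i)$ be the least index with $i_{\pi(\kappa)}\ne(i+u)_{\pi(\kappa)}$; this exists because $i\ne i+u$, and $\kappa\ge2$ because $i\in I_{u}$. Define $\iota(i)$ by flipping the $\pi(\kappa-1)$-th bit of $i$ (equivalently, $\iota(i)=i\oplus 2^{\pi(\kappa-1)-1}$, bitwise exclusive-or). I would then verify, in this order: (i) since the $\pi(\kappa-1)$-th bits of $i$ and of $i+u$ agree, flipping that bit introduces no carry, so $\iota(i)+u=(i+u)\oplus 2^{\pi(\kappa-1)-1}$, hence $0\le\iota(i)\le L-1-u$ and $\iota(i)\in I_{u}$ with $\kappa(\iota(i))=\kappa(i)$; (ii) $\iota$ is an involution with no fixed points; (iii) the variable $x_{\pi(\kappa-1)}$ occurs in $a(\mathbf{x})$ only in the linear term $c_{\pi(\kappa-1)}x_{\pi(\kappa-1)}$ and in the at most two path edges incident to it, $x_{\pi(\kappa-2)}x_{\pi(\kappa-1)}$ (when $\kappa\ge3$) and $x_{\pi(\kappa-1)}x_{\pi(\kappa)}$, so that on forming $\bigl(a(i)-a(\iota(i))\bigr)-\bigl(a(i+u)-a(\iota(i)+u)\bigr)$ the linear contributions cancel, the $x_{\pi(\kappa-2)}$-edge contributions cancel (the $\pi(\kappa-2)$-th bits of $i$ and $i+u$ agree), and only the $x_{\pi(\kappa-1)}x_{\pi(\kappa)}$-edge contribution survives, which is $\equiv q/2\pmod q$ because the $\pi(\kappa)$-th bits of $i$ and $i+u$ differ. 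Thus $a(i)-a(i+u)\equiv a(\iota(i))-a(\iota(i)+u)+q/2\pmod q$, so $\xi_{q}^{\,a(i)-a(i+u)}=-\,\xi_{q}^{\,a(\iota(i))-a(\iota(i)+u)}$, and pairing the terms of $S(u)$ off by $\iota$ gives $S(u)=0$. Hence $(\psi(\mathbf{a}),\psi(\mathbf{b}))$ is a GCP.

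The pair $(\psi(\mathbf{a}),\psi(\mathbf{c}))$ is obtained by the identical argument run at the other end of the path: the quadratic part of $a(\mathbf{x})$ depends only on the edge set of the path through $\pi(1),\pi(2),\dots,\pi(m)$, which is unchanged when $\pi$ is reversed, so the other endpoint variable $x_{\pi(m)}$ of the path — the $x_m$ of the statement — plays the same structural role as $x_{\pi(1)}$; splitting instead on the $\pi(m)$-th bit, redefining $\kappa$ as the largest disagreeing index, and flipping $x_{\pi(\kappa+1)}$ yields $\rho_{\psi(\mathbf{a})}(u)+\rho_{\psi(\mathbf{c})}(u)=0$. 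I expect steps (i)--(iii) to be the main obstacle: the coordinate one flips must be the path-neighbour $\pi(\kappa-1)$, not $\pi(1)$ itself, and it is exactly this choice that forces the residual phase to be $q/2$; moreover one must do the carry bookkeeping in (i) carefully to be sure that $\iota$ preserves $I_{u}$. The remaining steps are routine.
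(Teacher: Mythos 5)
Your argument for the pair $(\psi(\mathbf{a}),\psi(\mathbf{b}))$ is correct and complete: the reduction to the vanishing of the single sum $S(u)$ over indices whose $\pi(1)$-th bits agree, followed by the sign-reversing, fixed-point-free involution obtained by flipping the $\pi(\kappa-1)$-th bit, is exactly the classical Davis--Jedwab/Paterson proof. The paper itself gives no proof of this lemma (it is imported from \cite{Paterson-00} and \cite{R.A-08}), so there is no in-paper argument to compare against; your route is the one those references use.

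The gap is in your treatment of $(\psi(\mathbf{a}),\psi(\mathbf{c}))$. The lemma defines $c(\mathbf{x})=a(\mathbf{x})+\frac{q}{2}x_{m}$ with the \emph{literal} variable $x_{m}$, whereas your ``identical argument run at the other end of the path'' needs the offset variable to be the path endpoint $x_{\pi(m)}$; you silently identify the two (``the other endpoint variable $x_{\pi(m)}$ of the path --- the $x_m$ of the statement''). For a general permutation $\pi$ the index $m$ need not be an endpoint of the path $\pi(1),\dots,\pi(m)$; in that case splitting on the $m$-th bit does not work (the variable $x_m$ has two incident path edges and no choice of neighbouring bit to flip isolates a single one), and in fact the claim itself fails. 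Concretely, take $q=2$, $m=3$, $\pi(1)=1,\ \pi(2)=3,\ \pi(3)=2$, all $c_k=c=0$, so that $a=x_1x_3+x_3x_2$ and $c=a+x_3$; then $\psi(\mathbf{a})=(+,+,+,+,+,-,-,+)$, $\psi(\mathbf{c})=(+,+,+,+,-,+,+,-)$, and $\rho_{\psi(\mathbf{a})}(1)+\rho_{\psi(\mathbf{c})}(1)=3+1=4\neq 0$, so $(\psi(\mathbf{a}),\psi(\mathbf{c}))$ is not a GCP. The statement therefore needs the additional hypothesis $m\in\{\pi(1),\pi(m)\}$ (equivalently, $x_m$ should be read as $x_{\pi(m)}$). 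The paper's actual uses are unaffected --- Theorem~\ref{thm-optimalcon} only invokes the $\mathbf{b}$-mate, and Theorem~\ref{thm-SZCCS-non1} imposes $\pi(m)=m$ --- but your proof should have surfaced this hypothesis rather than absorbed it.
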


\begin{Definition}\label{def-MOCCS}
Let ${\mathcal{S}}=\{{\mathcal{S}}_1,{\mathcal{S}}_2,\cdots,{\mathcal{S}}_K\}$, where each ${\mathcal{S}}_k=\{\mathbf{s}_{k,1},\mathbf{s}_{k,2},...,\mathbf{s}_{k,M}\}$ $(1\leq k\leq K)$ be a ZCC consisting of $M$ length-$L$ sequences. ${\mathcal{S}}$ is called a $(K,M,L,Z)$-ZCCS if
$
C_{\mathcal{S}_i,\mathcal{S}_k}(u) =0, ~~\forall~~1\le i\ne k\le K  ~\textrm{and}~ |u|\leq Z.
$
In particular, when $Z=L$, it is called a $(K,M,L)$-MOCCS.
\end{Definition}

The following lemma is about the upper bound of the size of ZCCS, which was proposed in \cite{Fan-Yuan-Tu-07}.
\begin{Lemma}\label{lem-bound}(\cite{Fan-Yuan-Tu-07}, \cite{Liu-Guan-Ng-11})
Any unimodular $(K,M,L,Z)$-ZCCS $\mathcal{S}=\{\mathcal{S}_1,\mathcal{S}_2,...,\mathcal{S}_K\}$ satisfies $K\leq \left\lfloor \frac{ML}{Z+1}\right\rfloor,$ where $\mathcal{S}_k=\{\mathbf{s}_{k,1},\mathbf{s}_{k,2},\cdots,\mathbf{s}_{k,M}\}$ and $\mathbf{s}_{k,m}=\left(s_{k,m}(0), \right.$ $\left. s_{k,m}(1),\cdots,s_{k,m}(L-1)\right)$ $(1\leq k\leq K,1\leq m\leq M)$. In particular, when $K= \left\lfloor \frac{ML}{Z+1}\right\rfloor$, $\mathcal{S}$ is called  an optimal $(K,M,L,Z)$-ZCCS.
\end{Lemma}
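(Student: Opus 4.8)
The plan is to derive the bound from a dimension/orthogonality argument: from the given unimodular $(K,M,L,Z)$-ZCCS I would manufacture $K(Z+1)$ pairwise‑orthogonal nonzero vectors sitting in an ambient space of dimension essentially $ML$, so that $K(Z+1)\le ML$, and then the floor follows at once because $K$ is a positive integer. The factor $Z+1$ is forced by the $i=k$ case: within a single code the ZCZ only covers $0<|u|\le Z$, so the shift set used in the construction must have all pairwise differences in $\{1,\dots,Z\}$ and hence cannot exceed $Z+1$ elements, with $\{0,1,\dots,Z\}$ being extremal.

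Concretely, for each code $\mathcal S_k$ I would pad every constituent sequence $\mathbf s_{k,m}$ with $Z$ trailing zeros, concatenate the $M$ padded sequences in order using $\|$ to get one long sequence $\mathbf g_k$, and then take the cyclic shifts $T^{\tau}(\mathbf g_k)$ for $\tau=0,1,\dots,Z$. The amount of zero padding is tuned precisely so that, whenever $|\tau_1-\tau_2|\le Z$, the $m$-th block of $T^{\tau_1}(\mathbf g_i)$ overlaps only the $m$-th block of $T^{\tau_2}(\mathbf g_k)$ and never a block with index $m'\neq m$. Consequently the Hermitian inner product $\langle T^{\tau_1}(\mathbf g_i),\,T^{\tau_2}(\mathbf g_k)\rangle$ collapses to a sum of block‑aligned partial sums which, using the relation (\ref{eq-u}), equals $C_{\mathcal S_i,\mathcal S_k}(\tau_1-\tau_2)$ up to a complex conjugation.

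Then I would read off the three defining properties of an SZCCS/ZCCS applied to these inner products: if $i\neq k$ then $C_{\mathcal S_i,\mathcal S_k}(u)=0$ for every $|u|\le Z$ (so in particular for $u=\tau_1-\tau_2$, including $u=0$); if $i=k$ and $\tau_1\neq\tau_2$ then $0<|\tau_1-\tau_2|\le Z$, so $C_{\mathcal S_k}(\tau_1-\tau_2)=0$ because $\mathcal S_k$ is a ZCC; and if $i=k$ and $\tau_1=\tau_2$ then $C_{\mathcal S_k}(0)=\sum_{m=1}^{M}\sum_{j=0}^{L-1}|s_{k,m}(j)|^{2}=ML>0$ by unimodularity. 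Hence the $K(Z+1)$ vectors $\{T^{\tau}(\mathbf g_k): 1\le k\le K,\ 0\le\tau\le Z\}$ are pairwise orthogonal and nonzero, which yields $K(Z+1)\le ML$ and therefore $K\le\lfloor ML/(Z+1)\rfloor$. The optimality definition is then just the statement that this is attained with equality.

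The step I expect to be the main obstacle is the bookkeeping that pins the ambient dimension down to exactly $ML$ rather than a slightly larger value: a naive zero‑padded concatenation lives in $\mathbb{C}^{M(L+Z)}$ and only gives the weaker estimate $K\le\lfloor M(L+Z)/(Z+1)\rfloor$, so obtaining the sharp constant requires a careful choice of padding/shift layout (and the associated overlap count) so that all cross‑block contributions vanish while every aligned contribution is a genuine full correlation sum. For this refined counting I would follow the arguments of \cite{Fan-Yuan-Tu-07,Liu-Guan-Ng-11}; the remaining ingredients — unimodularity for the nonzero diagonal, the ZCZ conditions for the vanishing off‑diagonal, and integrality for the floor — are routine.
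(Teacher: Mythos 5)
The paper does not actually prove this lemma---it is imported verbatim from \cite{Fan-Yuan-Tu-07} and \cite{Liu-Guan-Ng-11}---so your attempt can only be judged on its own terms, and on those terms it has a genuine gap at precisely the step you flag yourself. Your block-overlap bookkeeping is fine: with $Z$ trailing zeros appended to each constituent sequence and shifts $\tau\in\{0,\dots,Z\}$, the $m$-th block of one shifted copy never meets the $m'$-th block of another for $m'\neq m$, so the inner products do collapse to $C_{\mathcal{S}_i,\mathcal{S}_k}(\tau_1-\tau_2)$ (up to conjugation via Eq.~(\ref{eq-u})), and the $K(Z+1)$ vectors are indeed pairwise orthogonal and nonzero. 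But they live in $\mathbb{C}^{M(L+Z)}$, which only yields $K\leq\lfloor M(L+Z)/(Z+1)\rfloor$. That is strictly weaker than the claimed $K\leq\lfloor ML/(Z+1)\rfloor$, and the sharp constant is the entire content of the lemma; deferring it to ``the refined counting of the references'' means the central step is absent.

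Moreover, your hope that a cleverer padding/shift layout rescues the argument cannot be realized within this framework. To bring the ambient dimension down to $ML$ you must remove all padding, i.e.\ use genuine cyclic shifts of the concatenated sequences; the resulting inner products are then \emph{periodic} correlation sums, and $\phi_{\mathbf{a},\mathbf{b}}(u)=\rho_{\mathbf{a},\mathbf{b}}(u)+\rho^{*}_{\mathbf{b},\mathbf{a}}(L-u)$ shows these involve the tail-end aperiodic sums $C_{\mathcal{S}_i,\mathcal{S}_k}(L-u)$, which a ZCCS with only a front-end ZCZ does not control. (That cyclic-shift argument \emph{does} work for the SZCCSs of Definition~\ref{def-SZCCS}, whose tail-end ZCZ kills exactly those terms---which is why Theorem~\ref{thm-bound} follows easily---but it does not prove Lemma~\ref{lem-bound} as stated for general ZCCSs.) Closing the gap requires the genuinely different machinery of the cited works (an aperiodic Welch-bound/Gram-matrix trace computation over the full family of shifted sequences rather than a bare dimension count), which your proposal does not supply.
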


\subsection{Introduction to GSM}
We consider a single-carrier GSM (SC-GSM) system with $N_t$ TA elements, $N_r$ receive antennas and $N_{\text{active}}$ transmit RF chains over frequency-selective channels. Moreover, we consider a QAM/PSK modulation with constellation size of $\mathcal{M}_{\text{GSM}}$. An $N_{\text{active}}\times N_t$ switch connects the RF chains to the TAs. In a given channel use, each user selects $N_{\text{active}}$ in $N_t$ TAs, and transmit $N_{\text{active}}$ symbols from a QAM/PSK modulation alphabet $\mathbb{A}$ on the selected antennas. The remaining $N_t-N_{\text{active}}$ antennas remain silent. Fig. \ref{GSM} shows the GSM transmitter at the user terminal. Over each time-slot $k$, there are $\left\lfloor \log_2{N_t\choose{N_{\text{active}}}}\right\rfloor+\left\lfloor \log_2|\mathbb{A}|\right\rfloor$ bits, denoted by $\mathbf{b}$, conveyed by a GSM transmitter. Specially,  the first $\left\lfloor \log_2{N_t\choose{N_{\text{active}}}}\right\rfloor$ bits, denoted by $\mathbf{b}_1$, are used to active the $i_1\text{th},i_2\text{th},...,i_{N_{\text{active}}}\textit{th}$ TAs through a antenna activation pattern selector, which is determined by the mapping between information bits and antenna activation patterns. The table in Fig. \ref{GSM} gives an example of that mapping for $N_t=5$ and $N_{\text{active}}=2$. Suppose the selected antenna activation pattern is denoted by $\mathbf{s}$, where ``1'' in $\mathbf{s}$  indicates that the antenna corresponding
to that coordinate is active and silent otherwise. The last $\left\lfloor \log_2|\mathbb{A}|\right\rfloor$ bits, denoted by $\mathbf{b}_2$, are used to select $N_{\text{active}}$ ``constellation symbols'' $\mathbf{B}_1,\mathbf{B}_2,\cdots,\mathbf{B}_{N_{\text{active}}}$ through a modulation symbol mapper, conveyed
through the $N_{\text{active}}$ activated antennas after adding zero
prefix (ZP) or cyclic prefix (CP) to combat dispersive GSM channels, respectively.
 Details of GSM transmit principle can be found in \cite{Naras-Rav-Cho-15}.
\begin{figure*}[htbp]
  \centering
  \includegraphics[width=6in]{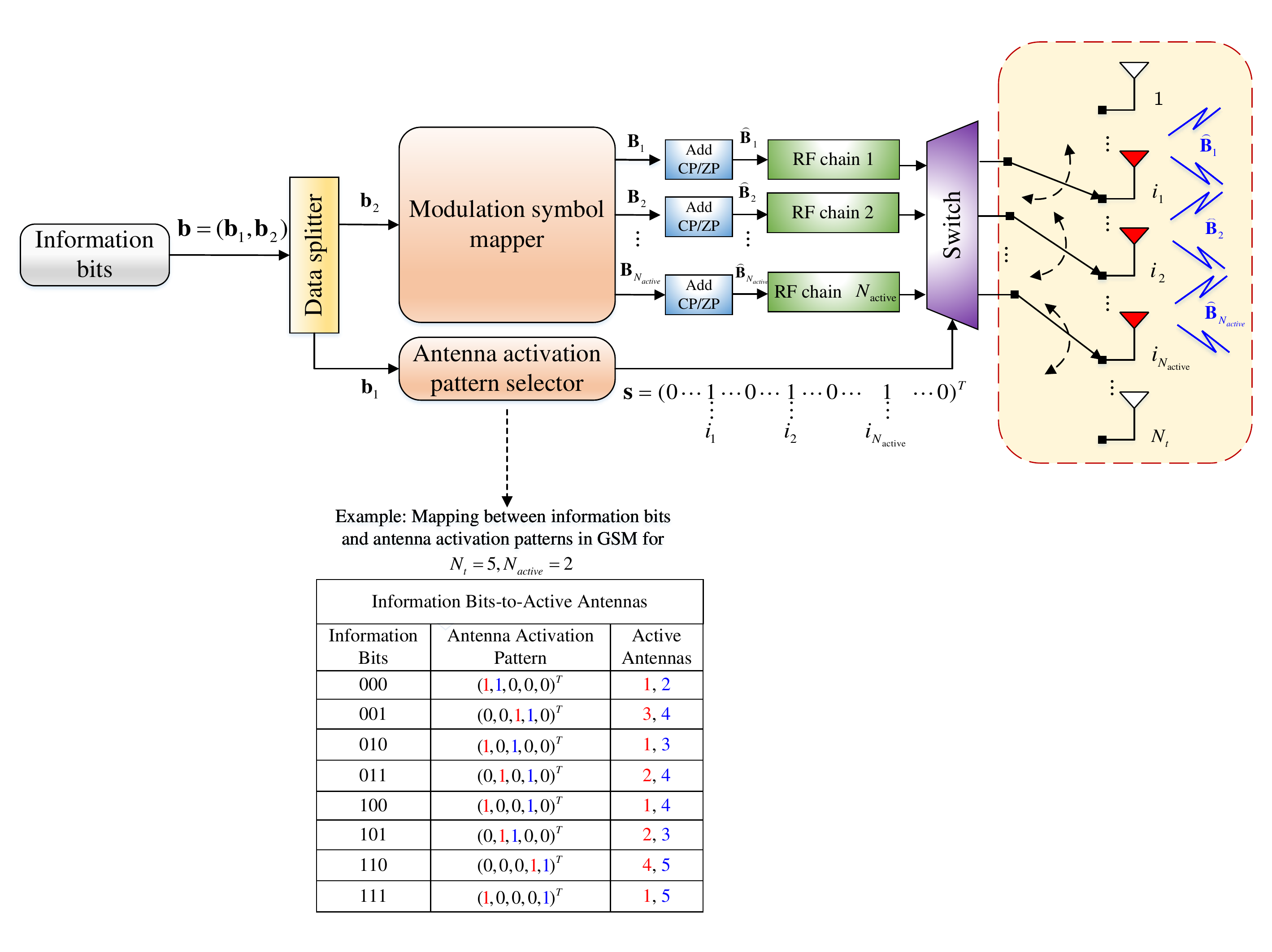}\\
  \caption{The structure of a GSM transmitter, where $\left\{i_1,i_2,\cdots,i_{N_{\text{active}}}\right\}$ denotes the index set of activated TAs.}\label{GSM}
\end{figure*}
\begin{Example}
Consider an SC-GSM system with $N_{\text{active}}=2$ RF chains and $N_t=5$ TAs using BPSK modulation $\mathcal{M}_{\text{GSM}}=2$. Specifically, two out of the five transmit  antennas are activated at each time-slot using $\left\lfloor \log_2{5\choose{2}}\right\rfloor=3$ information bits. In total,  there are only $2^{\left\lfloor log_2{5\choose{2}}\right\rfloor}=8$ activation patterns for signaling. The mapping of information bits to an activation pattern in GSM refers to the table in Fig. \ref{GSM}.
For illustration purpose, we consider natural mapping for BPSK modulation over each activated TA. Suppose each SC-GSM block constitutes 4 GSM symbols. Suppose further these symbols correspond to 16 message bits $(1001111011000100)$, with symbols $\mathbf{b}(1)=(1001)$, $\mathbf{b}(2)=(1110)$, $\mathbf{b}(3)=(1100)$ and $\mathbf{b}(4)=(0100)$. Taking the first GSM symbol for example, we have $\mathbf{b}(1)=(\mathbf{b}_1(1),\mathbf{b}_2(1))$ where $\mathbf{b}_1(1)=(100)$ selects the antenna activation pattern $\mathbf{s}=(1,0,0,1,0)^T$, and $\mathbf{b}_{2}(1)=(1)$ which implies that the modulated symbols $\mathbf{B}_1=\mathbf{B}_2=\xi_2^1=-1$. This means that during the first time-slot, the first and forth TAs are activated for the sending of BPSK symbol $-1$. Then, the first GSM symbol can be written as $\mathbf{d}_1=(-1,0,0,-1,0)^T$. The entire SC-GSM block can be expressed by the sparse matrix:
$$
 (\mathbf{d}_1,\mathbf{d}_2,\mathbf{d}_3,\mathbf{d}_4)=
 \left(
  \begin{array}{cccc}
  -1&1 &0 & 1   \\
  0& 0 &0 & 0 \\
  0& 0 & 0 & 1 \\
 -1 & 0 & 1 & 0 \\
 0 & 1 & 1 & 0 \\
  \end{array}
  \right).
$$
Note that  if SM is used instead with the same modulation
order, the number of transmit antennas must be increased to
eight to maintain the same spectral efficiency.
\end{Example}

\section{Symmetrical $Z$-complementary code set (SZCCS): Properties And Constructions}

This section presents the properties and  constructions of SZCCSs. The definition and some properties of SZCCSs are as follows.
\begin{Definition}\label{def-SZCCS}
For a positive integer $Z$, $\mathcal{S}=\{\mathcal{S}_1,\mathcal{S}_2,...,\mathcal{S}_K\}$ is called a $(K,M,L,$ $Z)$-SZCCS where $\mathcal{S}_k=\{\mathbf{s}_{k,1},\mathbf{s}_{k,2},...,\mathbf{s}_{k,M}\}$ and $\mathbf{s}_{k,m}=(s_{k,m}(0),s_{k,m}(1),...,$ $s_{k,m}(L-1))$ $(1\leq k\leq K,1\leq  m\leq M)$, if $\mathcal{S}$ satisfies the following conditions:
\begin{description}
  \item[C1:]$C_{\mathcal{S}_k}(u)=0$, for $|u|\in \mathcal{T}_1\bigcup \mathcal{T}_2$;
  \item[C2:]$C_{\mathcal{S}_k,\mathcal{S}_j}(u)=0$, for $k\ne j$ and $|u|\in \mathcal{T}_1\bigcup \mathcal{T}_2\bigcup\{0\}$;
\end{description}
where $\mathcal{T}_1=\{1,2,...,Z\}$ and $\mathcal{T}_2=\{L-Z,L+1-Z,...,$ $L-1\}$.
\end{Definition}

\begin{Remark}
  Note that SZCCS considers the cross-correlation function sum between the different ZCCs, whereas, CZCP proposed in \cite{Liu-Yang-20} is characterized by the cross-correlation function sum of the sequences in a same ZCP.
\end{Remark}

\begin{figure*}[htbp]
  \centering
  \includegraphics[width=5.5IN]{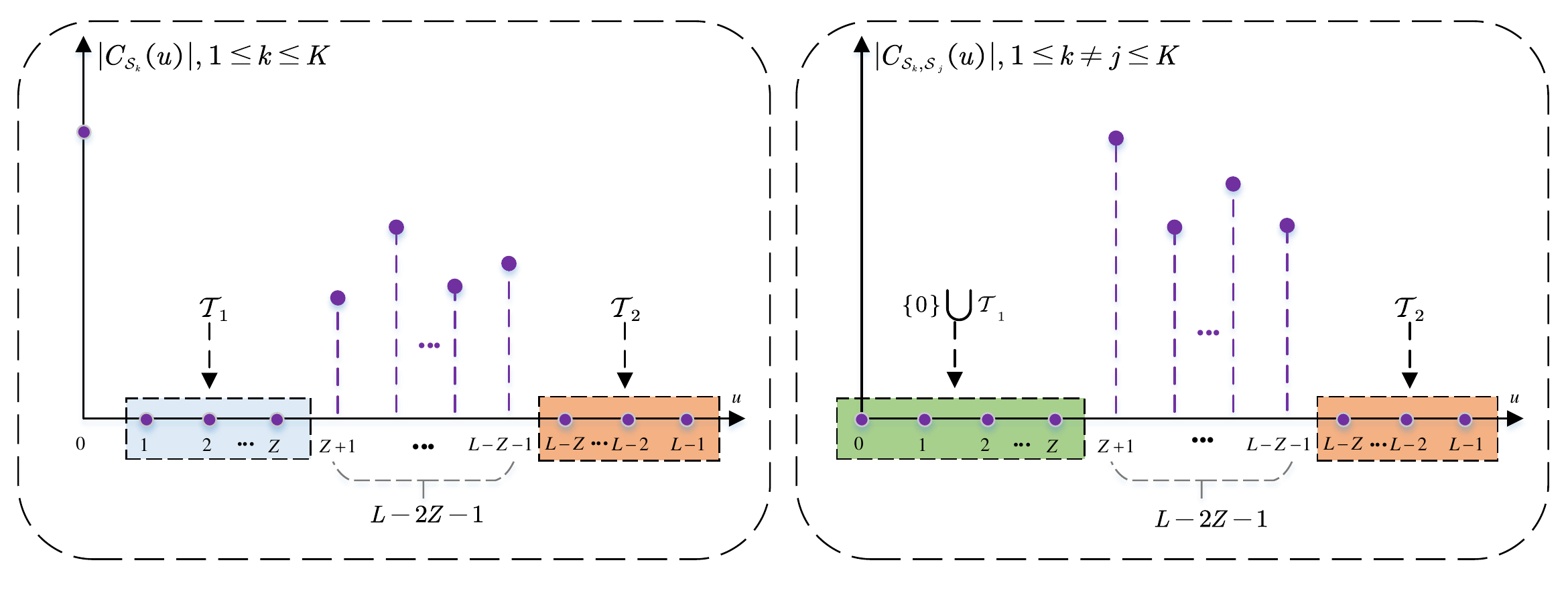}\\
  \caption{The correlation properties of $(K,M,L,Z)$-SZCCS}\label{fig-def}
\end{figure*}

Fig. \ref{fig-def} illustrates the correlation properties of $(K,M,L,Z)$-SZCCS which shows that there are ZCZs at both the front and tail of auto- or cross- correlation sums.  Note that when $Z\geq\frac{L-1}{2}$, a $(K,M,L,Z)$-SZCCS reduces to a $(K,M,L)$-MOCCS defined in \textbf{Definition \ref{def-MOCCS}}. Hence,
the following theorem is straightforward from \textbf{Lemma \ref{lem-bound}}.

\begin{Theorem}\label{thm-bound}
Any unimodular $(K,M,L,Z)$-SZCCS $\mathcal{S}=\{\mathcal{S}_1,$ $\mathcal{S}_2,...,\mathcal{S}_K\}$ satisfies $K\leq \left\lfloor \frac{ML}{Z+1}\right\rfloor,$ where $\mathcal{S}_k=\{\mathbf{s}_{k,1},\mathbf{s}_{k,2},$ $\cdots,\mathbf{s}_{k,M}\}$ and $\mathbf{s}_{k,m}=(s_{k,m}(0), s_{k,m}(1),$ $\cdots,s_{k,m}(L-1))$ $(l\leq k\leq K,1\leq m\leq M)$. In particular, when $K= \left\lfloor \frac{ML}{Z+1}\right\rfloor$, $\mathcal{S}$ is called  an optimal $(K,M,L,Z)$-SZCCS.
\end{Theorem}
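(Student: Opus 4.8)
The plan is to observe that the correlation requirements defining an SZCCS already contain those defining an ordinary ZCCS, so that the bound is inherited directly from Lemma~\ref{lem-bound} with no new computation.

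First I would unpack Definition~\ref{def-SZCCS}. With $\mathcal{T}_1=\{1,2,\dots,Z\}$, restricting condition \textbf{C1} to $|u|\in\mathcal{T}_1$ says $C_{\mathcal{S}_k}(u)=0$ for all $0<|u|\le Z$, i.e.\ each $\mathcal{S}_k$ is an $(M,L,Z)$-ZCC; restricting condition \textbf{C2} to $|u|\in\mathcal{T}_1\cup\{0\}$ says $C_{\mathcal{S}_i,\mathcal{S}_k}(u)=0$ for all $i\ne k$ and all $|u|\le Z$. Comparing with Definition~\ref{def-MOCCS}, these two facts together say precisely that $\mathcal{S}=\{\mathcal{S}_1,\dots,\mathcal{S}_K\}$ is a unimodular $(K,M,L,Z)$-ZCCS; the tail-end conditions supplied by $\mathcal{T}_2$ are extra constraints that we simply discard for the purpose of the upper bound. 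Applying Lemma~\ref{lem-bound} to this ZCCS then yields $K\le\lfloor ML/(Z+1)\rfloor$, and ``optimal'' is declared exactly when equality holds.

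There is no real obstacle here; it is a definitional containment. The only point worth a careful sentence is that the index set in \textbf{C2} includes $u=0$, which is what makes the in-phase cross-correlation vanishing demanded by the ZCCS definition actually follow (this is also why the SZCCS definition lists $\{0\}$ in \textbf{C2} but not in \textbf{C1}). Alternatively one can use the reduction noted just before the statement: once $Z\ge(L-1)/2$ the set $\mathcal{T}_1\cup\mathcal{T}_2$ exhausts all nonzero shifts of magnitude at most $L-1$, so an SZCCS is an MOCCS and the classical bound applies; but the ZCCS-containment argument has the merit of handling every $Z$ uniformly. A genuinely sharper bound that exploits the tail-end zone $\mathcal{T}_2$ would require a separate, more delicate argument, which is not attempted here.
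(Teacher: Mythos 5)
Your proposal is correct and matches the paper's intent: the paper declares Theorem~\ref{thm-bound} ``straightforward from Lemma~\ref{lem-bound}'', and you supply precisely the missing observation that conditions \textbf{C1} and \textbf{C2} restricted to $\mathcal{T}_1$ (together with $u=0$ in \textbf{C2}) already make $\mathcal{S}$ a $(K,M,L,Z)$-ZCCS, so the bound is inherited. Your remark that this containment argument covers every $Z$ uniformly, whereas the MOCCS reduction mentioned in the paper only applies when $Z\ge (L-1)/2$, is a fair and accurate refinement of the paper's terse justification.
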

\begin{Corollary}
For any $(K,M,L,Z)$-SZCCS $\mathcal{S}=\{\mathcal{S}_1,\mathcal{S}_2,\cdots,$ $\mathcal{S}_K\}$ where $\mathcal{S}_k=\{\mathbf{s}_{k,1},\mathbf{s}_{k,2},...,\mathbf{s}_{k,M}\}$ and $\mathbf{s}_{k,m}=(s_{k,m}(0),$ $s_{k,m}(1),...,s_{k,m}(L-1))$ $(1\leq k\leq K,1\leq  m\leq M)$, it satisfies the following properties:
\begin{description}
  \item[P1:]$\mathcal{\tilde{S}}=\{c_1\mathcal{S}_1,c_2\mathcal{S}_2,...,c_K\mathcal{S}_K\}$
      is also a $(K,M,L,Z)$-SZCCS, where $c_k\in\mathbb{C}$ $(1\leq k\leq K)$ is nonzero and $c_k\mathcal{S}_k=\{c_k\mathbf{s}_{k,1},c_k\mathbf{s}_{k,2},...,c_k\mathbf{s}_{k,M}\}$;
  \item[P2:] $\overleftarrow{\mathcal{S}}=\{\overleftarrow{\mathcal{S}_1},\overleftarrow{\mathcal{S}_2},...,\overleftarrow{\mathcal{S}_K}\}$ is also a $(K,M,L,Z)$-SZCCS, where $\overleftarrow{\mathcal{S}_k}=\{\overleftarrow{\mathbf{s}_{k,1}},\overleftarrow{\mathbf{s}_{k,2}},$ $...,\overleftarrow{\mathbf{s}_{k,M}}\}$ and $\overleftarrow{\mathbf{s}_{k,m}}$ $=\left(s_{k,m}(L-1),s_{k,m}(L-2),...,s_{k,m}(0)\right)~(1\leq m$ $\leq M)$.
\end{description}
Note that if $\mathcal{S}$ is optimal, $\mathcal{\tilde{S}}$ and $\overleftarrow{\mathcal{S}}$ are also optimal.
\end{Corollary}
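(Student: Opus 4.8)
The plan is to verify conditions C1 and C2 of Definition~\ref{def-SZCCS} for $\tilde{\mathcal{S}}$ and for $\overleftarrow{\mathcal{S}}$ in turn, each time reducing everything to a single elementary identity for the aperiodic correlation: one identity for scalar multiplication, one for time reversal.

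For P1, I would first note that for any two length-$L$ sequences $\mathbf{a},\mathbf{b}$ and any $\alpha,\beta\in\mathbb{C}$, the definition of $\rho_{\mathbf{a},\mathbf{b}}(u)$ gives $\rho_{\alpha\mathbf{a},\beta\mathbf{b}}(u)=\alpha\beta^{*}\rho_{\mathbf{a},\mathbf{b}}(u)$ for every $u$. Summing over the flock index $m=1,\dots,M$ then yields $C_{c_k\mathcal{S}_k,c_j\mathcal{S}_j}(u)=c_kc_j^{*}\,C_{\mathcal{S}_k,\mathcal{S}_j}(u)$, and in particular $C_{c_k\mathcal{S}_k}(u)=|c_k|^{2}C_{\mathcal{S}_k}(u)$. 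Since every $c_k$ is nonzero, the right-hand sides vanish exactly where the corresponding quantities for $\mathcal{S}$ vanish, so C1 and C2 transfer to $\tilde{\mathcal{S}}$, which is therefore again a $(K,M,L,Z)$-SZCCS. Because the parameters $K,M,L,Z$ do not change, the optimality condition $K=\lfloor ML/(Z+1)\rfloor$ of Theorem~\ref{thm-bound} is inherited (and one may take $|c_k|=1$ if one also wishes to keep $\tilde{\mathcal{S}}$ unimodular).

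For P2, the crux is the reversal identity $\rho_{\overleftarrow{\mathbf{a}},\overleftarrow{\mathbf{b}}}(u)=\rho_{\mathbf{a},\mathbf{b}}(-u)$, which I would prove for $0\le u\le L-1$ by writing $\rho_{\overleftarrow{\mathbf{a}},\overleftarrow{\mathbf{b}}}(u)=\sum_{i=0}^{L-1-u}a(L-1-i)b^{*}(L-1-i-u)$ and substituting $j=L-1-i-u$, which converts it into $\sum_{j=0}^{L-1-u}a(j+u)b^{*}(j)=\rho_{\mathbf{a},\mathbf{b}}(-u)$; the remaining cases follow from Eq.~(\ref{eq-u}). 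Summing over $m$ gives $C_{\overleftarrow{\mathcal{S}_k},\overleftarrow{\mathcal{S}_j}}(u)=C_{\mathcal{S}_k,\mathcal{S}_j}(-u)$. Since the forbidden-shift set $\mathcal{T}_1\cup\mathcal{T}_2$ in C1, and $\mathcal{T}_1\cup\mathcal{T}_2\cup\{0\}$ in C2, are specified through $|u|$ and hence are invariant under $u\mapsto -u$, the vanishing conditions carry over to $\overleftarrow{\mathcal{S}}$; reversal also preserves unimodularity, so optimality is again inherited.

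I expect the only point requiring genuine care is the index bookkeeping in the reversal identity — matching the summation limits after the change of variable and checking the degenerate cases $u=0$ and $|u|\ge L$ — while P1 and the symmetry argument for the index sets $\mathcal{T}_1,\mathcal{T}_2$ are essentially immediate from bilinearity of the correlation and from the fact that the ZCZs are described via $|u|$.
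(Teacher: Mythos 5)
Your proof is correct, and it is exactly the direct verification from the definitions that the paper itself declares ``straightforward'' and omits: bilinearity gives $C_{c_k\mathcal{S}_k,c_j\mathcal{S}_j}(u)=c_kc_j^{*}C_{\mathcal{S}_k,\mathcal{S}_j}(u)$ for P1, and the reversal identity $C_{\overleftarrow{\mathcal{S}_k},\overleftarrow{\mathcal{S}_j}}(u)=C_{\mathcal{S}_k,\mathcal{S}_j}(-u)$ together with the $|u|$-symmetry of $\mathcal{T}_1\cup\mathcal{T}_2$ gives P2. Your index bookkeeping in the change of variable $j=L-1-i-u$ checks out against the paper's definition of $\rho_{\mathbf{a},\mathbf{b}}$, and your remark that unimodularity (hence the optimality bound of Theorem~\ref{thm-bound}) is preserved when $|c_k|=1$ is an appropriate caveat.
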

\begin{proof}
The proof is straightforward according the definitions of aperiodic auto-correlation, aperiodic cross-correlation and SZCCS, so we omit it here.
\end{proof}
\subsection{Proposed construction of optimal SZCCS}
\begin{Theorem}\label{thm-optimalcon}
Let $q$ be an even integer, $m\geq4$ and  $L=2^m$, and let
\begin{align*}
 f(\mathbf{x})=& \frac{q}{2}\sum_{s=1}^{m-2}x_{\pi(s)}
  x_{\pi(s+1)}+\sum_{s=1}^{m}\mu_{s}
  x_{s}+\mu,\\
  a_k(\mathbf{x})=& f(\mathbf{x})+d_1^k\left(\frac{q}{2} x_{\pi(m-2)}+\frac{q}{2}x_{\pi(m-2)}x_{\pi(m)}\right)\end{align*}
     \begin{align*}
  &+d_2^k\left(\frac{q}{2} x_{\pi(m-1)}+\frac{q}{2}x_{\pi(m-2)}x_{\pi(m)}\right)\\
  &+d_3^k\left(\frac{q}{2} x_{\pi(m)}+\frac{q}{2}x_{\pi(m-1)}x_{\pi(m)}\right)\\
  &+d_4^k\left(\frac{q}{2} x_{\pi(m-1)}+\frac{q}{2}x_{\pi(m-2)}x_{\pi(m)}\right.\\&\left.+\frac{q}{2}x_{\pi(m-1)}x_{\pi(m)}\right),\\
     b_k(\mathbf{x})=& a_k(\mathbf{x})
  +\frac{q}{2}x_{\pi(1)},
   \end{align*}
where $\mathbf{\underline{x}}\in \mathbb{Z}_{2}^{m},~\mu,\mu_s$ are any given elements in $\mathbb{Z}_q$, $\pi$ is a permutation of the symbols $\{1,2,\cdots,m\}$ with $\{\pi(m-1),\pi(m)\}=\{m-1,m\}$ and $\left(\mathbf{D}_k=(d_1^k,d_2^k,d_3^k,d_4^k)\right)_{k=1}^8=((0,0,0,0),(1,0,1,0),(1,1,0,0),(0,1,1,0),
(0,0,0,1),(1,0,$ $1,1),(1,1,0,1),(0,1,1,1))$.
Then, the set $\mathcal{S}=\left\{\mathcal{S}_k=\left\{\psi(\mathbf{a}_k),\psi(\mathbf{b}_k)\right\}:
k\in\{1,2,\cdots,8\}\right\}$
is an optimal $(8,2,2^{m},2^{m-2}-1)$-SZCCS.
\end{Theorem}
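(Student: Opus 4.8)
The plan is to first dispose of optimality and then verify conditions \textbf{C1} and \textbf{C2} of \textbf{Definition \ref{def-SZCCS}} by a pairing argument on binary indices. Optimality costs nothing: $\lfloor ML/(Z+1)\rfloor=\lfloor 2^{m+1}/2^{m-2}\rfloor=2^{3}=8=K$, so by \textbf{Theorem \ref{thm-bound}} it suffices to show that $\mathcal S$ is a $(8,2,2^m,2^{m-2}-1)$-SZCCS. By Eq.~(\ref{eq-u}) and the symmetry of $\mathcal T_1\cup\mathcal T_2$ in $\pm u$, only $u\ge0$ need be treated. Since $b_k=a_k+\frac q2 x_{\pi(1)}$, one has $\psi(\mathbf b_k)(i)=(-1)^{i_{\pi(1)}}\psi(\mathbf a_k)(i)$, where $i_{\pi(1)}$ is the $\pi(1)$-th digit in the binary expansion $i=\sum_r i_r 2^{r-1}$; hence
\begin{align*}
C_{\mathcal S_k,\mathcal S_j}(u)&=\rho_{\psi(\mathbf a_k),\psi(\mathbf a_j)}(u)+\rho_{\psi(\mathbf b_k),\psi(\mathbf b_j)}(u)\\
&=2\sum_{i\in S_u}\psi(\mathbf a_k)(i)\,\overline{\psi(\mathbf a_j)(i+u)},
\end{align*}
where $S_u=\{\,0\le i\le 2^m-1-u:\ i_{\pi(1)}=(i+u)_{\pi(1)}\,\}$. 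Taking $j=k$ recovers $C_{\mathcal S_k}(u)$, so \textbf{C1} is the $j=k$ case of the analysis below.

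Fix $0<u<2^m$. For $i\in S_u$ let $t=t(i)$ be the least index with $i_{\pi(t)}\neq(i+u)_{\pi(t)}$; then $t\ge2$. Suppose $t\le m-2$ and set $v:=\pi(t-1)$. Because $\{\pi(m-1),\pi(m)\}=\{m-1,m\}$ forces $\{\pi(1),\dots,\pi(m-2)\}=\{1,\dots,m-2\}$, we have $v\in\{1,\dots,m-2\}$ and $v\notin\{\pi(m-2),\pi(m-1),\pi(m)\}$; consequently $x_v$ occurs in both $a_k$ and $a_j$ only through the path term $\frac q2\sum_s x_{\pi(s)}x_{\pi(s+1)}$ of $f$ and the common linear term $\mu_v x_v$ — all the $\mathbf D_k$-dependent monomials, and hence $a_j-a_k$ as well, involve only $x_{\pi(m-2)},x_{\pi(m-1)},x_{\pi(m)}$. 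Pair $i$ with the index $i'$ obtained by complementing its $v$-th digit. As in the Davis--Jedwab computation behind \textbf{Lemma \ref{lem-GCP}}, one checks $i'\in S_u$ and $t(i')=t$, so $i\mapsto i'$ is a fixed-point-free involution of $\{i\in S_u:t(i)\le m-2\}$; and complementing the $v$-th digit of $i$ and of $i+u$ simultaneously multiplies $\psi(\mathbf a_k)(i)\overline{\psi(\mathbf a_j)(i+u)}$ by $\xi_q^{E}$ with
\[
E\equiv\tfrac q2\big(i_{\pi(t-2)}+(i+u)_{\pi(t-2)}+i_{\pi(t)}+(i+u)_{\pi(t)}\big)\pmod q
\]
(the $\pi(t-2)$-digits agree by minimality of $t$ — omit that pair of terms when $t=2$ — the $\pi(t)$-digits differ, and the linear contributions cancel because the $v$-th digits agree); therefore $E\equiv\tfrac q2\pmod q$, $\xi_q^{E}=-1$, and each pair contributes $0$. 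The remaining indices have $t(i)\in\{m-1,m\}$, i.e. $i$ and $i+u$ agree on all digits $1,\dots,m-2$, which forces $2^{m-2}\mid u$. But every $u\in\mathcal T_1\cup\mathcal T_2=\{1,\dots,2^{m-2}-1\}\cup\{3\cdot2^{m-2}+1,\dots,2^m-1\}$ satisfies $u\not\equiv0\pmod{2^{m-2}}$, so no such index exists; hence $C_{\mathcal S_k,\mathcal S_j}(u)=0$ for all $k,j$ and all $u\in\mathcal T_1\cup\mathcal T_2$. This yields \textbf{C1} and the $u\neq0$ part of \textbf{C2}.

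It remains to treat $u=0$ in \textbf{C2}, i.e. $k\neq j$. Here $S_0=\{0,\dots,2^m-1\}$, so $C_{\mathcal S_k,\mathcal S_j}(0)=2\sum_{i=0}^{2^m-1}\xi_q^{a_k(i)-a_j(i)}$. Since $a_k-a_j\equiv\frac q2 h_{kj}\pmod q$ for a $\{0,1\}$-valued function $h_{kj}$ depending only on the three digits $x_{\pi(m-2)},x_{\pi(m-1)},x_{\pi(m)}$ (only parities of the $\mathbf D_k$-coefficients matter), this equals $2^{m-2}\sum_{\varepsilon\in\{0,1\}^3}(-1)^{h_{kj}(\varepsilon)}$, which vanishes exactly when $h_{kj}$ is balanced on $\{0,1\}^3$. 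Now $h_{kj}$ depends only on $\mathbf D_k\oplus\mathbf D_j$, and $\{\mathbf D_1,\dots,\mathbf D_8\}$ is a subgroup of $(\mathbb Z_2)^4$ of order $8$ (generated by $(1,0,1,0),(1,1,0,0),(0,0,0,1)$), so $\mathbf D_k\oplus\mathbf D_j$ ranges over its seven nonzero elements. Writing $p=x_{\pi(m-2)},\,y=x_{\pi(m-1)},\,z=x_{\pi(m)}$, the corresponding seven functions are $p\oplus y$, $p\oplus y\oplus z$, $z$, together with four mixed ones such as $p\oplus pz\oplus z\oplus yz$ and $y\oplus pz\oplus yz$; a direct evaluation shows each attains $0$ and $1$ four times each. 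Hence $C_{\mathcal S_k,\mathcal S_j}(0)=0$, which completes \textbf{C2} and the proof.

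The conceptual crux is the dichotomy in the pairing step: the path-flip involution eliminates every term except those for which $i$ and $i+u$ coincide on all low $m-2$ digits, which occurs only when $2^{m-2}\mid u$ — exactly the shifts excluded by $\mathcal T_1\cup\mathcal T_2$. This is precisely where the hypothesis $\{\pi(m-1),\pi(m)\}=\{m-1,m\}$ enters, and it is also what singles out the value $Z=2^{m-2}-1$. The only laborious parts I foresee are (i) the phase bookkeeping under a single digit flip, which is the \textbf{Lemma \ref{lem-GCP}} calculation carried out for the pair $(a_k,a_j)$ instead of a single Golay pair, and (ii) the mechanical balancedness check of the seven functions $h_{kj}$ — the specific eight vectors $\mathbf D_k$ are chosen precisely so that (ii) succeeds; I expect (i) to be the step most prone to sign or indexing slips, but neither step is substantive.
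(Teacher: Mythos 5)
Your proof is correct and follows essentially the same route as the paper's: the relation $b_k=a_k+\frac{q}{2}x_{\pi(1)}$ reduces every correlation sum to the indices with $i_{\pi(1)}=(i+u)_{\pi(1)}$, the Davis--Jedwab digit-flip pairing on $\pi(t-1)$ annihilates all shifts except multiples of $2^{m-2}$ (exactly the shifts excluded by $\mathcal{T}_1\cup\mathcal{T}_2$), and $u=0$ is settled by a direct balancedness computation. If anything, your write-up is more complete than the paper's appendix, which explicitly verifies only the pair $(k,n)=(1,2)$ and declares the remaining cases analogous, whereas your observation that $\{\mathbf{D}_1,\dots,\mathbf{D}_8\}$ is a subgroup of $(\mathbb{Z}_2)^4$, together with the balancedness check of all seven nonzero difference functions, genuinely closes that case analysis.
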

\emph{Proof of} \textbf{Theorem \ref{thm-optimalcon}}: See Appendix A.

\begin{Example}\label{eg-optimal}
 For $q=2$ and $m=4$, let $\pi$ be a permutation of $\{1,2,3,4\}$ with $\pi(i)=i$ for $1\leq i\leq4$, and $f(\mathbf{x})=\sum\limits_{k=1}^2x_kx_{k+1}$.
Then, $\mathcal{S}=\left\{\mathcal{S}_k=\left\{\psi(\mathbf{a}_k),\psi(\mathbf{b}_k)\right\}:
k\in\{1,2,\cdots,8\}\right\}$
is an optimal $(8,2,16,3)$-SZCCS.
$ \left(|C_{\mathcal{S}_i,\mathcal{S}_j}(u)|\right)_{u=0}^{15}=
 \left(|\rho_{\psi(\mathbf{a}_i),\psi(\mathbf{a}_j)}(u)+\rho_{\psi(\mathbf{b}_i),
  \psi(\mathbf{b}_j)}(u)|\right)_{u=0}^{15}~~(1\leq i,j\leq8)$ can be found in Table \ref{table-eg-optimal}.
\begin{sidewaystable}
  \centering
\caption{$\mathbf{C}=\left(|C_{\mathcal{S}_i,\mathcal{S}_j}(u)|\right)_{u=0}^{15}$ in Example \ref{eg-optimal}}\label{table-eg-optimal}
  {\footnotesize
\begin{tabular}{|c|c|c|c|c|c|c|c|c|}
  \hline
  \diagbox[width=3em,trim=l]{$i$}{$\mathbf{C}$}
  {$j$} & 1  &  2&3& 4&5& 6 & 7 &8\\ \hline
  1&   $(32,\mathbf{0}_7,16,\mathbf{0}_7)$ & $(\mathbf{0}_4,24,\mathbf{0}_7,8,\mathbf{0}_3)$ & $(\mathbf{0}_4,-8,\mathbf{0}_{11})$  & $(\mathbf{0}_{12},8,\mathbf{0}_{3})$ &$\left(\mathbf{0}_{12},8,\mathbf{0}_{3}\right)$ & $(\mathbf{0}_4,8,\mathbf{0}_{11})$ &$(\mathbf{0}_4,8,\mathbf{0}_{11})$&$\left(\mathbf{0}_8,16,\mathbf{0}_{7}\right)$\\ \hline
  2&$(\mathbf{0}_4,24,\mathbf{0}_7,8,\mathbf{0}_3)$ & $(32,\mathbf{0}_7,16,\mathbf{0}_7)$&$\left(\mathbf{0}_{12},8,\mathbf{0}_{3}\right)$  & $ \left(\mathbf{0}_{4},8,\mathbf{0}_{11}\right)$  & $\left(\mathbf{0}_{4},8,\mathbf{0}_{11}\right)$&$(\mathbf{0}_{12},8,\mathbf{0}_{3})$
  &$\left(\mathbf{0}_8,16,\mathbf{0}_{7}\right)$&$(\mathbf{0}_{12},8,\mathbf{0}_{3})$\\ \hline
  3&   $\left(\mathbf{0}_{12},8,\mathbf{0}_3\right)$ & $(\mathbf{0}_4,8,\mathbf{0}_{11})$ & $(32,\mathbf{0}_3,8,\mathbf{0}_7,8,\mathbf{0}_3)$  &  $\left(\mathbf{0}_{8},16,\mathbf{0}_{7}\right)$ &$(\mathbf{0}_{4},16,\mathbf{0}_{3},16,\mathbf{0}_{7})$ & $(\mathbf{0}_4,8,\mathbf{0}_{7},8,\mathbf{0}_3)$ &$\left(\mathbf{0}_4,8,\mathbf{0}_{11}\right)$&$(\mathbf{0}_4,16,\mathbf{0}_{7},8,\mathbf{0}_3)$\\ \hline
  4& $\left(\mathbf{0}_{4},8,\mathbf{0}_{11}\right)$ &$(\mathbf{0}_{12},8,\mathbf{0}_{3})$&$\left(\mathbf{0}_{8},16,\mathbf{0}_{7}\right)$& $(32,\mathbf{0}_3,8,\mathbf{0}_7,8,\mathbf{0}_3)$  & $(\mathbf{0}_{4},8,\mathbf{0}_{11})$&$(\mathbf{0}_4,16,\mathbf{0}_{3},16,\mathbf{0}_7)$&
  $(\mathbf{0}_4,16,\mathbf{0}_{7},8,\mathbf{0}_3)$&$(\mathbf{0}_4,-8,\mathbf{0}_{11})$\\ \hline
  5& $(\mathbf{0}_{4},8,\mathbf{0}_{11})$ & $(\mathbf{0}_{12},8,\mathbf{0}_{3})$ & $(\mathbf{0}_{4},16,\mathbf{0}_{3},16,\mathbf{0}_{7})$  &  $\left(\mathbf{0}_{4},8,\mathbf{0}_{11}\right)$ &$(32,\mathbf{0}_3,8,\mathbf{0}_7,8,\mathbf{0}_3)$ &$(\mathbf{0}_8,16,\mathbf{0}_7)$ &$(\mathbf{0}_4,16,\mathbf{0}_{7},8,\mathbf{0}_3)$&$\left(\mathbf{0}_4,8,\mathbf{0}_{11}\right)$\\ \hline
  6&  $\left(\mathbf{0}_{12},8,\mathbf{0}_{3}\right)$  & $\left(\mathbf{0}_{4},8,\mathbf{0}_{11}\right)$& $(\mathbf{0}_{4},8,\mathbf{0}_{7},8,\mathbf{0}_{3})$  & $(\mathbf{0}_{4},16,\mathbf{0}_{3},16,\mathbf{0}_{7})$ & $(\mathbf{0}_8,16,\mathbf{0}_7)$&
  $(32,\mathbf{0}_3,8,\mathbf{0}_7,8,\mathbf{0}_3)$&$(\mathbf{0}_4,8,\mathbf{0}_{11})$&
  $(\mathbf{0}_4,16,\mathbf{0}_{7},8,\mathbf{0}_3)$\\ \hline
   7&$(\mathbf{0}_{4},8,\mathbf{0}_{7},8,\mathbf{0}_{3})$ &$(\mathbf{0}_{8},16,\mathbf{0}_{7})$&$(\mathbf{0}_{4},16,\mathbf{0}_{7},8,\mathbf{0}_{3})$ &  $\left(\mathbf{0}_{4},8,\mathbf{0}_{11}\right)$ &$\left(\mathbf{0}_4,8,\mathbf{0}_{11}\right)$&$(\mathbf{0}_4,16,\mathbf{0}_{7},8,\mathbf{0}_3)$&
   $(32,\mathbf{0}_7,16,\mathbf{0}_7)$&
   $(\mathbf{0}_4,8,\mathbf{0}_{7},8,\mathbf{0}_3)$\\ \hline
  8& $(\mathbf{0}_{8},16,\mathbf{0}_{7})$ &$(\mathbf{0}_{4},8,\mathbf{0}_{7},8,\mathbf{0}_{3})$& $(\mathbf{0}_{4},8,\mathbf{0}_{11})$  &$(\mathbf{0}_{4},16,\mathbf{0}_{7},8,\mathbf{0}_{3})$ & $(\mathbf{0}_4,16,\mathbf{0}_{7},8,\mathbf{0}_{3})$&$ (\mathbf{0}_4,8,\mathbf{0}_{11})$&$(\mathbf{0}_4,8,\mathbf{0}_{7},8,\mathbf{0}_3)$&
  $(32,\mathbf{0}_7,16,\mathbf{0}_7)$\\ \hline
\end{tabular}}
\end{sidewaystable}
\end{Example}

\begin{Remark}
The significance of \textbf{Theorem 1} is reflected in the following aspects.

\texttt{For sequence design:}
\begin{itemize}
    \item It provides an infinite class of optimal SZCCSs, which means that it can generate optimal SZCCSs for infinite groups of parameters.
    \item This construction is interesting since it is based on two facts we observed.
        \begin{enumerate}
          \item For the aperiodic correlation sum of sequence pairs in \textbf{Lemma 1}, the width of ZCZ $Z$  can be controlled by the permutation $\pi$, which means that if $\{\pi(1),\pi(2),\cdots,\pi(v)\}=\{1,2,\cdots,v\}$ for a positive $v$ where $1\leq v<m-1$, then, we have $Z=2^v-1$.
          \item The orthogonality of different sequence pairs in \textbf{Lemma 1} can be guaranteed by some offsets composed of some linear functions.
        \end{enumerate}
    \item The $(8,2,2^m,2^{m-2})$-SZCCS constructed in \textbf{Theorem 1}  achieves the theoretical upper bound on
the size of the SZCCS proposed in \textbf{Theorem 1}, which implies that for any $K$, if there exists a $(K,2,2^m,2^{m-2}-1)$-SZCCS, then, we have $K\leq8$. Furthermore, a $(K,2,2^m,2^{m-2}-1)$-SZCCS can be obtained by selecting any $K$ codes from an $(8,2,2^m,2^{m-2})$-SZCCS.
  \end{itemize}

\texttt{For application in GSM systems:}
  \begin{itemize}
    \item For a $(K,M,L,Z)$-SZCCS, $K$ represents the maximum number of RF chains that can be supported, $Z+1$ represents the number of multi-paths, $M(L+Z)$ represents the length of training sequences under the proposed training framework in the sequel. Hence, for fixed number of multi-path and training sequence length, the bigger the $K$, the better. Note that $(8,2,2^m,$ $2^{m-2})$-SZCCS constructed in \textbf{Theorem 1} is optimal, so $K=8$ is the best for the GSM systems under the proposed training framework in the sequel with $2^{m-2}$ multi-paths and training sequence length $2(2^m+2^{m-2}-1)$.
  \end{itemize}
\end{Remark}

However, the sequence length of the proposed optimal SZCCS is limited to a power-of-two. The following subsection shows a construction of SZCCSs, which are not optimal but with non-power-of-two sequence length.

\subsection{Proposed construction of SZCCS with non-power-of-two lengths}
  \begin{Theorem}\label{thm-SZCCS-non1}
Let $q$ be an even integer, $m\geq3,~1\leq v<m-1$ and  $L=2^{m-1}+2^v$, and let
\begin{align*}
 g(\mathbf{x})=& \frac{q}{2}\sum_{s=1}^{m-2}x_{\pi(s)}
  x_{\pi(s+1)}+\sum_{s=1}^{m-1}\lambda_{s}
  x_{\pi(s)}x_m+\sum_{s=1}^{m}\mu_{s}
  x_{s}+\mu, \\
  a_k(\mathbf{x})=&g(\mathbf{x})
  +\frac{q}{2}(k-1)\left(x_{m-1}+x_mx_{\pi(v)}\right),\\
 b_k(\mathbf{x})=&a_k(\mathbf{x})
 +\frac{q}{2}x_{\pi(1)},\\
  c_k(\mathbf{x})=&a_k(\mathbf{x})+\frac{q}{2}x_{m},\\
  d_{k}(\mathbf{x}) =& c_{k}(\mathbf{x})+\frac{q}{2}
 x_{\pi(1)},
\end{align*}
where $\mathbf{x}\in \mathbb{Z}_{2}^{m},~k\in\{1,2\},~\lambda_s,\mu,\mu_s$ are any given elements in $\mathbb{Z}_q$, and $\pi$ is a permutation of the symbols $\{1,2,\cdots,m\}$ with $\{\pi(1),\pi(2),\cdots,\pi(v)\}=\{1,2,\cdots,v\}$ and $\pi(m)=m$.
Then,  $\mathcal{S}=\left\{\mathcal{S}_k=\{\psi(\mathbf{a}_k),\psi(\mathbf{b}_k)\}:k=1,2\right\}$ and $\mathcal{S}'=\left\{\mathcal{S}'_k=\{\psi(\mathbf{c}_k),\psi(\mathbf{d}_k)\}:k=1,2\right\}$  are $(2,2,2^{m-1}+2^v,2^v-1)$-SZCCSs.
\end{Theorem}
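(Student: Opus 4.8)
The plan is to verify conditions \textbf{C1} and \textbf{C2} of Definition~\ref{def-SZCCS} directly for $\mathcal{S}$; the argument for $\mathcal{S}'$ is then the same, since $c_k=a_k+\frac{q}{2}x_m$ and $d_k=b_k+\frac{q}{2}x_m$ only negate the last $2^v$ entries (those indexed by $i$ with $x_m=1$) of every constituent sequence, which leaves all the correlation sums below unchanged. The first step is a concatenation decomposition. Because $x_m$ is the most significant bit and $L=2^{m-1}+2^v$ with $v<m-1$, the truncated sequence $\psi(\mathbf{a}_k^{(L)})$ equals $\psi(\mathbf{U}_k)\,\|\,\psi(\mathbf{W}_k)$, where $\mathbf{U}_k$ is the length-$2^{m-1}$ sequence of $a_k|_{x_m=0}$ (a function of $x_1,\dots,x_{m-1}$) and $\mathbf{W}_k$ is the length-$2^v$ sequence of $a_k|_{x_m=1,\,x_{v+1}=\cdots=x_{m-1}=0}$ (a function of $x_1,\dots,x_v$). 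Using $\{\pi(1),\dots,\pi(v)\}=\{1,\dots,v\}$ and $\pi(m)=m$, one checks that $\mathbf{U}_k$ is a Davis--Jedwab sequence over $m-1$ variables with chain $\frac{q}{2}\sum_{s=1}^{m-2}x_{\pi(s)}x_{\pi(s+1)}$, and that the restriction $x_{v+1}=\cdots=x_{m-1}=0$ collapses this chain to $\frac{q}{2}\sum_{s=1}^{v-1}x_{\pi(s)}x_{\pi(s+1)}$, so $\mathbf{W}_k$ is a Davis--Jedwab sequence over $v$ variables; in both, $x_{\pi(1)}$ is an end variable of the chain. Since $b_k=a_k+\frac{q}{2}x_{\pi(1)}$ and $\pi(1)\le v$, we get $\psi(\mathbf{b}_k^{(L)})=\psi(\mathbf{U}_k')\,\|\,\psi(\mathbf{W}_k')$ with $\mathbf{U}_k'=\mathbf{U}_k+\frac{q}{2}x_{\pi(1)}$, $\mathbf{W}_k'=\mathbf{W}_k+\frac{q}{2}x_{\pi(1)}$, so by Lemma~\ref{lem-GCP} both $(\psi(\mathbf{U}_k),\psi(\mathbf{U}_k'))$ and $(\psi(\mathbf{W}_k),\psi(\mathbf{W}_k'))$ are GCPs. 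We also record the length-$2^v$ Golay sequence $\mathbf{P}_k=a_k|_{x_{v+1}=\cdots=x_m=0}$ over $x_1,\dots,x_v$, which is the first $2^v$ entries of $\psi(\mathbf{a}_k^{(L)})$ and differs from $\mathbf{W}_k$ by an affine function of $x_1,\dots,x_v$ (coming from the $x_m$-linked terms of $g$).

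For \textbf{C1}, fix $k$ and split $C_{\mathcal{S}_k}(u)=\rho_{\psi(\mathbf{U}_k)\|\psi(\mathbf{W}_k)}(u)+\rho_{\psi(\mathbf{U}_k')\|\psi(\mathbf{W}_k')}(u)$ according to how the shift crosses the block boundary. For $|u|\in\mathcal{T}_1=\{1,\dots,2^v-1\}$ (shorter than the short block), this equals the correlation sum of the GCP $(\psi(\mathbf{U}_k),\psi(\mathbf{U}_k'))$ at shift $u$, plus that of the GCP $(\psi(\mathbf{W}_k),\psi(\mathbf{W}_k'))$ at shift $u$, plus a ``straddle'' term pairing the $|u|$ last entries of the long block against the $|u|$ first entries of the short block; the first two sums vanish by the GCP property. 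For $|u|\in\mathcal{T}_2=\{2^{m-1}+1,\dots,2^{m-1}+2^v-1\}$ the shift exceeds $2^{m-1}$ and only a straddle remains, between the first $\le 2^v-1$ entries (governed by $\mathbf{P}_k,\mathbf{P}_k'$) and the last $\le 2^v-1$ entries of the short block (governed by $\mathbf{W}_k,\mathbf{W}_k'$); one checks this equals the \emph{full} aperiodic cross-correlation sum $\rho_{\psi(\mathbf{P}_k),\psi(\mathbf{W}_k)}(\delta)+\rho_{\psi(\mathbf{P}_k'),\psi(\mathbf{W}_k')}(\delta)$ with $\delta=|u|-2^{m-1}\in\{1,\dots,2^v-1\}$. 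Thus \textbf{C1} reduces to the vanishing of these straddle sums.

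For \textbf{C2}, note $a_2-a_1=b_2-b_1=\frac{q}{2}(x_{m-1}+x_mx_{\pi(v)})$, which restricts to the linear term $\frac{q}{2}x_{m-1}$ on the long block and to $\frac{q}{2}x_{\pi(v)}$ on the short block (where $x_{m-1}=0$ and $\pi(v)\le v$). For $u=0$ this yields $C_{\mathcal{S}_1,\mathcal{S}_2}(0)=2\sum_{i=0}^{L-1}(-1)^{i_{m-1}+i_m i_{\pi(v)}}$, which is zero because the sum of $(-1)^{i_{m-1}}$ over the $2^{m-1}$ long-block indices is $0$ and the sum of $(-1)^{i_{\pi(v)}}$ over the $2^v$ short-block indices is $0$. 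For $u\in\mathcal{T}_1\cup\mathcal{T}_2$, the same block split makes the long-block contribution the mixed correlation sum of the GCP $(\psi(\mathbf{U}_1),\psi(\mathbf{U}_1'))$ with its $\frac{q}{2}x_{m-1}$-translate, the short-block contribution the mixed correlation sum of $(\psi(\mathbf{W}_1),\psi(\mathbf{W}_1'))$ with its $\frac{q}{2}x_{\pi(v)}$-translate, plus a straddle; each piece is shown to vanish by the technique used for \textbf{C1}.

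The technical heart, and the step I expect to be hardest, is the vanishing of all the straddle sums (and the short-block $\mathcal{S}_1$--$\mathcal{S}_2$ sum). These are aperiodic correlation sums of pairs of Davis--Jedwab sequences over $v$ variables sharing the chain $\frac{q}{2}\sum_{s=1}^{v-1}x_{\pi(s)}x_{\pi(s+1)}$, with pair members linked by the end-variable offset $\frac{q}{2}x_{\pi(1)}$; they are \emph{not} killed by the naive cancellation $\rho_{\psi(\mathbf{b})}(u)=-\rho_{\psi(\mathbf{a})}(u)$ (which only applies when the lowest set bit of $u$ lies at position $\pi(1)$), so one must run the classical Davis--Jedwab pairing: for an index $i$ in the sum, send it to the index obtained by toggling bit $i_{\pi(1)}$ or bit $i_{\pi(2)}$ (the choice dictated by whether bits $\pi(1)$ of $i$ and of $i\pm u$ agree), and verify the four resulting terms cancel because toggling a chain-end variable negates exactly one term of the chain. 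The hypothesis $\{\pi(1),\dots,\pi(v)\}=\{1,\dots,v\}$ (so that $x_{\pi(1)}$, and indeed the whole active part of the chain after truncation, live among the $v$ least-significant positions) together with the bound $|u|\le 2^v-1$ is precisely what keeps the partner index within the summation range and clear of the truncation boundary — and this is what forces the zero-correlation-zone width to be exactly $2^v-1$.
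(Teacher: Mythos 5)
Your concatenation decomposition is sound: with $x_m$ the most significant bit, the truncated sequence is indeed the long block ($x_m=0$, a Golay function in $m-1$ variables, since $\pi(m)=m$ makes $\pi$ restrict to a permutation of $\{1,\dots,m-1\}$) followed by the short block ($x_m=1$, $x_{v+1}=\cdots=x_{m-1}=0$, a Golay function in $v$ variables), and your reduction of \textbf{C1}/\textbf{C2} to within-block GCP sums plus straddle sums is correct. (The paper omits the proof of this theorem entirely, deferring to the index-pairing proof of Theorem~\ref{thm-optimalcon}, which has no $\lambda_s$ cross terms, so your organization is a legitimate alternative.) The gap sits exactly where you place the ``technical heart'': the straddle sums do not vanish in general. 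You record that $\mathbf{P}_k$ and $\mathbf{W}_k$ differ by the affine function $\ell=\sum_{s=1}^{v}\lambda_{s}x_{\pi(s)}+\mathrm{const}$ induced by the $x_m$-linked terms of $g$, but your cancellation mechanism never uses (or survives) this fact. In $\sum_{i}\xi_q^{P_k(i)-W_k(i+\delta)}$, toggling bit $\pi(t-1)$ of both $i$ and $j=i+\delta$ shifts the shared quadratic chain by $q/2$ as in Davis--Jedwab, but it also shifts $-\ell(j)$ by $\mp\lambda_{t-1}$ with no compensating term on the $i$ side, because $\ell$ acts only on the short-block argument (equivalently, in the untruncated picture, because $i_m=0\ne j_m=1$ on every straddle term). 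The four paired terms therefore do not cancel when $\lambda_{t-1}\ne 0$, and this cannot be patched by a cleverer pairing.

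Indeed the statement itself fails for general $\lambda_s$. Take $q=2$, $m=4$, $v=2$, $\pi=\mathrm{id}$, $\lambda_1=1$, $\lambda_2=\lambda_3=\mu=\mu_s=0$, so $g=x_1x_2+x_2x_3+x_1x_4$, $L=12$, $Z=3$, $\mathcal{T}_2=\{9,10,11\}$. Then $\psi(\mathbf{a}_1)=(+,+,+,-,+,+,-,+,+,-,+,+)$ and $\psi(\mathbf{b}_1)=(+,-,+,+,+,-,-,-,+,+,+,-)$, giving $\rho_{\psi(\mathbf{a}_1)}(10)=a(0)a(10)+a(1)a(11)=2$ and $\rho_{\psi(\mathbf{b}_1)}(10)=b(0)b(10)+b(1)b(11)=2$, hence $C_{\mathcal{S}_1}(10)=4\ne 0$ although $10\in\mathcal{T}_2$, violating \textbf{C1} of Definition~\ref{def-SZCCS}. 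Only the coefficients $\lambda_s$ with $s\le v$ matter (those with $s>v$ vanish on both blocks after truncation); when these are all zero, $\mathbf{P}_k$ and $\mathbf{W}_k$ agree up to a constant (plus the benign $\frac{q}{2}(k-1)x_{\pi(v)}$ mate offset for $k=2$), your straddle sums become genuine GCP/mate correlation sums, and the argument closes. So either the hypothesis must be restricted (e.g., $\lambda_s=0$ for $1\le s\le v$) or an entirely different mechanism is needed; the proposal as written does not supply one.
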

\begin{proof}
  The proof of \textbf{Theorem \ref{thm-SZCCS-non1}} is similar with that of \textbf{Theorem \ref{thm-optimalcon}}, so we omit it here.
\end{proof}

\begin{Remark}
When $v=m-2$ in \textbf{Theorem \ref{thm-SZCCS-non1}}, one can generate $(2,2,2^{m-1}+2^{m-2},2^{m-2}-1)$-SZCCSs. However, to obtain SZCCSs with same $Z=2^{m-2}-1$ using \textbf{Theorem \ref{thm-optimalcon}}, the sequence length needs to be $L=2^m$, which is larger than that of SZCCSs generated from \textbf{Theorem \ref{thm-SZCCS-non1}}.
\end{Remark}

\section{Optimal Training Design Using SZCCSs for Broadband GSM Systems}

In this section, based on the generic training-based SC-MIMO transmission structure as shown in \cite{Liu-Yang-20}, we first present a GSM training framework using sparse matrices and derive the correlation properties of the row sequences of such a sparse matrix. Then, we show that the SZCCSs (proposed in Section III) can be utilized as a key component in optimal GSM training design.

We consider the training setting with a length-$\lambda$ CP and $N_t$ TAs over quasi-static frequency-selective channel. We denote the channel impulse response (CIR) from the $n$-th ($1\leq n\leq N_t$) transmit antenna to the receiver as $\mathbf{h}_n=(h_{n,0},h_{n,1},...,h_{n,\lambda})^T$ where $h_{n,l}$ $(0\leq l\leq\lambda)$ is the channel coefficient of the $l$-th path. Note that there is a training sequence followed by data payload in each block at a TA and CP is placed at the front of the training sequence \cite{Liu-Yang-20}. Let $\mathbf{x}_n=(x_{n,0},x_{n,1},...,x_{n,L-1})$ be the training sequence transmitted over the $n$-th TA. We assume that all the training sequences have identical energy of $E$.
Then, the minimum MSE  under LS channel estimator is achieved if and only if
\begin{align}\label{eq-MSE}
\phi_{\mathbf{x}_i,\mathbf{x}_j}(u)&=\left\{\begin{array}{ll}
                                             E, & \hbox{if}~i=j,~u=0, \\
                                             0, & \hbox{if}~i\ne j,~0\leq u\le\lambda,\\
                                             0, & \hbox{if}~i=j,~1\leq u\leq\lambda, \end{array}\right.
\end{align}
with
\begin{equation}\label{MSE}
 \hbox{minimum~MSE}=\frac{\sigma^2_w}{E}
\end{equation}
where $\sigma^2_w$ is the variance of the white complex Gaussian noise \cite{Liu-Yang-20}.
\begin{Definition}
For training sequences $\{\mathbf{x}_{n}\}_{n=1}^{N_t}$ where the sequences have identical energy of $E$, they are called optimal training sequences of  SC-MIMO systems under LS channel estimator if and only  they satisfy Eq. (\ref{eq-MSE}).
\end{Definition}
\subsection{Proposed Training Framework For Broadband GSM Systems}\label{sub-framework}

We define the training matrix $\Omega$ as $$
 \Omega =\left(
           \begin{array}{c}
             \mathbf{x}_1 \\
             \mathbf{x}_2 \\
             \vdots \\
             \mathbf{x}_{N_t} \\
           \end{array}
         \right)=\left(
                   \begin{array}{cccc}
                     x_{1,0} &  x_{1,1} & \cdots & x_{1,L-1} \\
                     x_{2,0} &  x_{2,1} & \cdots & x_{2,L-1} \\
                     \vdots & \vdots & \ddots & \vdots \\
                    x_{N_t,0} &  x_{N_t,1} & \cdots & x_{N_t,L-1} \\
                   \end{array}
                 \right)_{N_t\times L}.$$

Note that in GSM system, there are $N_{\text{active}}$ TAs activated over every time-slot.  Hence, $\Omega$ should be a sparse matrix where each training sequence $\mathbf{x}_n$ $(1\leq n\leq N_t)$ has $Q=L/\left\lceil N_t/N_{\text{active}}\right\rceil$ non-zero entries. In this paper, for simplicity, we suppose $N_{\text{active}}$ divides $N_t$. We consider the non-zero entries having identical magnitude of 1, and each training sequence has energy of $E=Q$.

Let $N_t=nN_{\text{active}}$ where $n$ is an integer with $0\leq n\leq D=N_t/N_{\text{active}}$. Let $\mathcal{S}=\{\mathcal{S}_1,\mathcal{S}_2,...,\mathcal{S}_{N_{\text{active}}}\}$ be a set of sequence sets, where $\mathcal{S}_k=\{\mathbf{s}_{k,1},\mathbf{s}_{k,2},...,\mathbf{s}_{k,D}\}$
and $\mathbf{s}_{k,m}=\left(s_{k,m}(0),s_{k,m}(1),...,s_{k,m}(Q-1)\right)$ for $1\leq k\leq N_{\text{active}}$ and $1\leq m\leq D$. For $1\leq m\leq D$, define\\
$$
 \mathcal{X}_m=\left(
     \begin{array}{c}
       \mathbf{s}_{1,m} \\
       \mathbf{s}_{2,m} \\
       \vdots \\
       \mathbf{s}_{N_{\text{active}},m} \\
     \end{array}
   \right)_{N_{\text{active}}\times Q}
$$ and
 $$\Omega=\left(
     \begin{array}{c}
       T^0\left(\mathcal{X}_1\mathbf{0}_{N_{\text{active}}\times\left(D-1\right)Q}\right) \\
   T^Q\left(\mathcal{X}_2\mathbf{0}_{N_{\text{active}}\times\left(D-1\right)Q}\right)\\
       \vdots \\
   T^{\left(D-1\right)Q}\left(\mathcal{X}_{D}\mathbf{0}
   _{N_{\text{active}}\times\left(D-1\right)Q}\right)\\
     \end{array}
   \right)_{N_t\times L}.
$$

An example of training matrix $\Omega$ having $N_t=4,~N_{\text{active}}=2$ and $L=2Q$ is shown below,
\begin{equation*}
 \Omega=\left(
     \begin{array}{cc}
       \mathbf{s}_{1,1}&\mathbf{0}\\
   \mathbf{s}_{2,1}&\mathbf{0}\\
   \mathbf{0}&\mathbf{s}_{1,2}\\
    \mathbf{0}&\mathbf{s}_{2,2}
     \end{array}
   \right)_{4\times 2Q},
\end{equation*}
where $\mathbf{0}$ denotes $\mathbf{0}_{1\times Q}$. Clearly, $\phi_{\mathbf{x}_1,\mathbf{x}_2}(u)$ and $\phi_{\mathbf{x}_3,\mathbf{x}_4}(u)$ are nonzero for $0\leq u\leq Q-1$, and $\phi_{\mathbf{x}_1,\mathbf{x}_3}(u),\phi_{\mathbf{x}_1,\mathbf{x}_4}(u),$ $\phi_{\mathbf{x}_2,\mathbf{x}_3}(u)$ and $\phi_{\mathbf{x}_2,\mathbf{x}_4}(u)$ are nonzero for $1\leq u\leq Q-1$. This implies that $\Omega$ cannot satisfy the optimal condition of GSM training sequences as Eq. (\ref{eq-MSE}).

To solve this problem, we consider $\mathcal{S}^j=\{\mathcal{S}^j_1,\mathcal{S}^j_2,$
$...,\mathcal{S}^j_{N_{\text{active}}}\}(1\leq j\leq J)$ be $J$ sets of sequence sets, where $\mathcal{S}^j_k=\{\mathbf{s}^j_{k,1},\mathbf{s}^j_{k,2},...,\mathbf{s}^j_{k,M}\}$
and $\mathbf{s}^j_{k,m}=\left(s^j_{k,m}(0),s^j_{k,m}(1),...,s^j_{k,m}(\theta-1)\right)$ is an unimodular sequence of length $\theta$ for $1\leq k\leq N_{\text{active}}$ and $1\leq m\leq M$, and the training matrix $\Omega$ with the following structure,
\begin{eqnarray}\label{eq-omi}
\Omega&=&(\Omega_1,\mathbf{0}_{N_t\times \lambda},\Omega_2,\mathbf{0}_{N_t\times \lambda},...,\Omega_J,\mathbf{0}_{N_t\times \lambda}),
\end{eqnarray}
\begin{eqnarray*}
\Omega_j&=& \left(
     \begin{array}{c}
       T^0\left(\mathcal{X}^j_1\mathbf{0}_{N_{\text{active}}\times\left(D-1\right)\theta}\right) \\
   T^\theta\left(\mathcal{X}^j_2\mathbf{0}_{N_{\text{active}}\times\left(D-1\right)\theta}\right)\\
       \vdots \\
   T^{\left(D-1\right)\theta}\left(\mathcal{X}^j_{D}\mathbf{0}
   _{N_{\text{active}}\times\left(D-1\right)\theta}\right)\\
     \end{array}
   \right)_{N_t\times D\theta},
\end{eqnarray*}
where for $1\leq n\leq D$,
$$
 \mathcal{X}_n^j=\left(
     \begin{array}{c}
       \mathbf{s}^j_{1,n} \\
       \mathbf{s}^j_{2,n} \\
       \vdots \\
       \mathbf{s}^j_{N_{\text{active}},n} \\
     \end{array}
   \right)_{N_{\text{active}}\times \theta}
$$
with $J\geq2$ is a positive integer and $\mathbf{0}_{N_t\times\lambda}$ in Eq. (\ref{eq-omi}) is called the $\lambda$-length zero-time slot of $\Omega$. Note that $DJ\theta+\lambda J=L$ and $J\theta=Q$.  In the sequel, we sometimes write the training matrix $\Omega$  as $(N_t,N_{\text{active}},\lambda,J,\theta)-\Omega$.

\begin{Example}
  Training matrix $(4,2,\lambda,2,\theta)-\Omega$ for channel estimation in GSM is shown in Fig. \ref{fig-eg-GSM} with
  \begin{align*}
    \Omega=&\left(
               \begin{array}{cccccc}
                 \mathbf{s}_{1,1}^1 & \mathbf{0}_{1\times\theta} &\mathbf{0}_{1\times\lambda} & \mathbf{s}_{1,1}^2 & \mathbf{0}_{1\times\theta} & \mathbf{0}_{1\times\lambda} \\
                 \mathbf{s}_{2,1}^1 & \mathbf{0}_{1\times\theta} & \mathbf{0}_{1\times\lambda} & \mathbf{s}_{2,1}^2 & \mathbf{0}_{1\times\theta} & \mathbf{0}_{1\times\lambda}\\
                 \mathbf{0}_{1\times\theta} & \mathbf{s}_{1,2}^1 & \mathbf{0}_{1\times\lambda} & \mathbf{0}_{1\times\theta} &\mathbf{s}_{1,2}^2 & \mathbf{0}_{1\times\lambda} \\
                 \mathbf{0}_{1\times\theta} & \mathbf{s}_{2,2}^1 & \mathbf{0}_{1\times\lambda} & \mathbf{0}_{1\times\theta} & \mathbf{s}_{2,2}^2 & \mathbf{0}_{1\times\lambda} \\
               \end{array}
             \right)\\
     =&  \left(
               \begin{array}{cccccc}
                 \mathcal{X}_1^1 & \mathbf{0}_{2\times\theta} &\mathbf{0}_{2\times\lambda} & \mathcal{X}_1^2 & \mathbf{0}_{2\times\theta} & \mathbf{0}_{2\times\lambda} \\
                 \mathbf{0}_{2\times\theta} & \mathcal{X}_2^1 & \mathbf{0}_{2\times\lambda} & \mathbf{0}_{2\times\theta} &\mathcal{X}_2^2 & \mathbf{0}_{2\times\lambda} \\
               \end{array}
             \right).
  \end{align*}
\begin{figure}[htbp]
  \centering
  \includegraphics[width=3in]{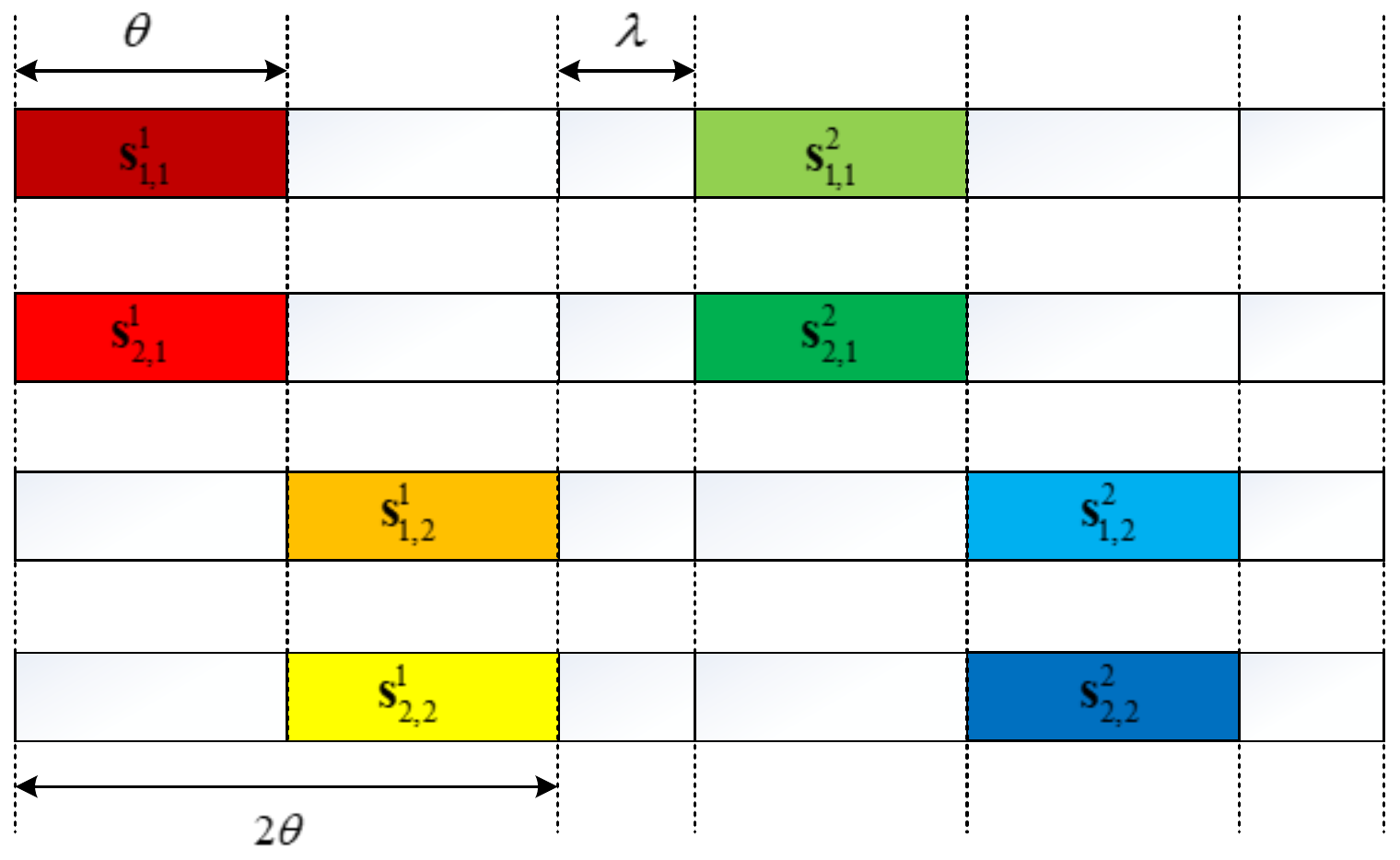}\\
  \caption{Training matrix with $N_t=4,N_{\text{active}}=2,J=2,L=4\theta+2\lambda$}\label{fig-eg-GSM}
\end{figure}

\end{Example}

According to Eq. (\ref{eq-MSE}), the optimal GSM training criteria under the proposed structure of training matrix in Eq. (\ref{eq-omi}) can be concluded in the following lemma.
\begin{Lemma}\label{lem-criteria}
  $\Omega$ is called an optimal GSM training sequence matrix in Eq. (\ref{eq-omi}) if and only if it satisfies the following three equations:
  \begin{align}
\nonumber
& \sum\limits_{j=1}^J\rho_{\mathbf{s}^j_{k,n}}(u)=0, \\\label{eq-condition-1}
 &\hbox{for}~1\leq u\leq\lambda, 1\leq k\leq N_{\text{active}}, 1\leq n\leq N_t/N_{\text{active}};\\
 \nonumber
&\sum\limits_{j=1}^J\rho_{\mathbf{s}^j_{k,n},\mathbf{s}^j_{k',n}}(u)=0,~\hbox{for}~
0\leq\mid u\mid\leq\lambda, 1\leq k\ne k'\leq N_{\text{active}},\\
\label{eq-condition-2}
 &\hspace{4cm}\hbox{and}~1\leq n\leq  N_t/N_{\text{active}};
  \end{align}
   \begin{align}
 \nonumber
&\sum\limits_{j=1}^J\rho_{\mathbf{s}^j_{k,n+1},\mathbf{s}^j_{k',n}}(\theta-u)=0,~\hbox{for}~ 1\leq u\leq\lambda,\\
\label{eq-condition-3}
 &\hspace{0.5cm}\hbox{and}~ 1\leq k\ne k'\leq N_{\text{active}},
~ 1\leq n\leq  N_t/N_{\text{active}}-1.
\end{align}
\end{Lemma}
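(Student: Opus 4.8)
The plan is to translate the optimality requirement \eqref{eq-MSE} into a package of statements about the periodic correlations of the rows of $\Omega$, and then to exploit the sparse ``windowed'' layout of those rows so that, for shifts of magnitude at most $\lambda$, every such periodic correlation collapses to a sum of \emph{aperiodic} correlations of the short building blocks $\mathbf{s}^j_{k,n}$. I would index the rows of $\Omega$ by pairs $(n,k)$ with $1\le n\le D:=N_t/N_{\text{active}}$ and $1\le k\le N_{\text{active}}$, writing $\mathbf{x}_{n,k}$ for the corresponding row. From the block form of $\Omega_j$ in \eqref{eq-omi}, the support of $\mathbf{x}_{n,k}$ is the disjoint union over $j=1,\dots,J$ of the length-$\theta$ windows $W^{(j)}_n=\{(j-1)(D\theta+\lambda)+(n-1)\theta,\dots,(j-1)(D\theta+\lambda)+n\theta-1\}$, on which $\mathbf{x}_{n,k}$ equals $\mathbf{s}^j_{k,n}$, and consecutive segments are insulated by the $\lambda$-length zero-time slots. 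Since one checks exactly as for \eqref{eq-u} that the periodic correlation obeys $\phi_{\mathbf{a},\mathbf{b}}(u)=\phi^{*}_{\mathbf{b},\mathbf{a}}(-u)$, it suffices to treat $0\le u\le\lambda$, and \eqref{eq-MSE} then asks precisely that $\phi_{\mathbf{x}_{n,k}}(0)=E$, that $\phi_{\mathbf{x}_{n,k}}(u)=0$ for $1\le u\le\lambda$, and that $\phi_{\mathbf{x}_{n,k},\mathbf{x}_{n',k'}}(u)=0$ whenever $(n,k)\ne(n',k')$ and $0\le u\le\lambda$.

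The technical core is a localization claim: for $0\le u\le\lambda$ (and using $\lambda<\theta$, the regime in which the framework is applied), a nonzero summand $x_{n,k}(i)\,x^{*}_{n',k'}(\lfloor i+u\rfloor_L)$ forces $i\in W^{(j)}_n$ and $\lfloor i+u\rfloor_L\in W^{(j)}_{n'}$ for one and the \emph{same} $j$, with no cyclic wrap-around. The reason is purely combinatorial: windows of a fixed block index in consecutive segments are separated by $D\theta+\lambda-\theta\ge\lambda+1$ positions, a shift of size at most $\lambda$ cannot cross a zero-time slot (the slot has exactly length $\lambda$, and $DJ\theta+\lambda J=L$ makes the cyclic boundary a zero-time slot as well), and windows whose block indices differ by at least $2$ are separated by $\ge\theta>\lambda$ positions; the degenerate case $D=1$ makes \eqref{eq-condition-3} vacuous and is immediate. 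Consequently $\phi_{\mathbf{x}_{n,k},\mathbf{x}_{n',k'}}(u)=0$ unless $n'\in\{n,n+1\}$, and in the two surviving cases one reads off, segment by segment, that $\phi_{\mathbf{x}_{n,k},\mathbf{x}_{n,k'}}(u)=\sum_{j=1}^{J}\rho_{\mathbf{s}^j_{k,n},\mathbf{s}^j_{k',n}}(u)$ when $n'=n$ (aligned overlap of the two windows), while $\phi_{\mathbf{x}_{n,k},\mathbf{x}_{n+1,k'}}(u)=\sum_{j=1}^{J}\rho_{\mathbf{s}^j_{k,n},\mathbf{s}^j_{k',n+1}}(u-\theta)$ when $n'=n+1$ (only the last $u$ coordinates of $W^{(j)}_n$ meet the first $u$ coordinates of $W^{(j)}_{n+1}$ after the shift), the latter equalling $\bigl(\sum_{j=1}^{J}\rho_{\mathbf{s}^j_{k',n+1},\mathbf{s}^j_{k,n}}(\theta-u)\bigr)^{*}$ by \eqref{eq-u}.

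With the localization in hand the lemma reduces to a case check against the reductions of the first paragraph. Setting $(n',k')=(n,k)$ and $u=0$ gives $\phi_{\mathbf{x}_{n,k}}(0)=\sum_{j}\rho_{\mathbf{s}^j_{k,n}}(0)=J\theta=Q=E$, which always holds since the blocks are unimodular of length $\theta$ and $J\theta=Q$; setting $(n',k')=(n,k)$ with $1\le u\le\lambda$ gives $\sum_{j}\rho_{\mathbf{s}^j_{k,n}}(u)=0$, which is \eqref{eq-condition-1}; setting $n'=n$, $k'\ne k$, gives $\sum_{j}\rho_{\mathbf{s}^j_{k,n},\mathbf{s}^j_{k',n}}(u)=0$ for $0\le u\le\lambda$, and the range $-\lambda\le u<0$ is recovered from $\phi_{\mathbf{a},\mathbf{b}}(u)=\phi^{*}_{\mathbf{b},\mathbf{a}}(-u)$, yielding \eqref{eq-condition-2}; and setting $n'=n+1$ with $1\le u\le\lambda$ yields \eqref{eq-condition-3}, the value at $u=0$ being $\rho_{\mathbf{s}^j_{k,n},\mathbf{s}^j_{k',n+1}}(-\theta)=0$ automatically and the direction $n'=n-1$ being vacuous by the localization. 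Every equivalence used is reversible, so the three displayed equations are jointly equivalent to \eqref{eq-MSE}, which is the assertion of the lemma.

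The step I expect to be the main obstacle is the localization claim itself: one must pin down exactly which pairs of windows can overlap after a shift of size $\le\lambda$, ruling out both inter-segment contributions and contributions across the cyclic boundary, and this is where the identities $DJ\theta+\lambda J=L$, $J\theta=Q$ and the inequality $\lambda<\theta$ all enter. A secondary but fiddly point is aligning the summation indices at the ``seam'' between blocks $n$ and $n+1$ with the aperiodic correlation $\rho(\theta-u)$ appearing in \eqref{eq-condition-3}; handling this is precisely the role of the reflection identity \eqref{eq-u}.
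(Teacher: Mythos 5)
Your overall strategy is the right one, and it supplies precisely what the paper leaves out: the paper offers no proof of this lemma at all (it simply asserts that the criteria "can be concluded" from Eq. (\ref{eq-MSE})), so the periodic-to-aperiodic localization you carry out is the argument that is actually needed. Your bookkeeping is correct where you do it: for $0\le u\le\lambda<\theta$ the only window pairs that can interact are $(j,n)$ with $(j,n)$ and $(j,n)$ with $(j,n+1)$, inter-segment and wrap-around contributions are killed by the $\lambda$-length zero slots (the final slot handles the cyclic boundary), and the seam overlap correctly produces $\sum_j\rho_{\mathbf{s}^j_{k,n},\mathbf{s}^j_{k',n+1}}(u-\theta)=\bigl(\sum_j\rho_{\mathbf{s}^j_{k',n+1},\mathbf{s}^j_{k,n}}(\theta-u)\bigr)^{*}$.

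There is, however, one genuine gap at the seam, and it is exactly where your derivation and the lemma's literal statement part company. Equation (\ref{eq-MSE}) demands $\phi_{\mathbf{x}_i,\mathbf{x}_j}(u)=0$ for \emph{every} pair of distinct rows, and the rows indexed $(n,k)$ and $(n+1,k)$ are distinct rows (different transmit antennas) even though they share the same $k$. Your own localization therefore forces $\sum_{j=1}^J\rho_{\mathbf{s}^j_{k,n+1},\mathbf{s}^j_{k,n}}(\theta-u)=0$ for $1\le u\le\lambda$ as well, i.e.\ the seam condition must hold for \emph{all} pairs $k,k'$, not only $k\ne k'$ as in (\ref{eq-condition-3}). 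As stated, conditions (\ref{eq-condition-1})--(\ref{eq-condition-3}) do not imply (\ref{eq-MSE}): nothing in them controls $\sum_j\rho_{\mathbf{s}^j_{k,n+1},\mathbf{s}^j_{k,n}}(\theta-u)$ when the blocks with consecutive $n$ are unrelated sequences, so the "if" direction of the equivalence fails and your closing sentence ("every equivalence used is reversible") overclaims. The restriction $k\ne k'$ in (\ref{eq-condition-3}) appears to be a slip in the paper — the authors themselves invoke the $k=k'$ case later, since the tail-end autocorrelation requirement $\sum_j\rho_{\mathbf{a}_{k,j}}(u)=0$ for $u\in\{\theta-\lambda,\dots,\theta-1\}$ in Eq. (\ref{eq-result-c1}) is exactly the $k=k'$ seam condition after setting $\mathbf{s}^j_{k,n}=\mathbf{a}_{k,j}$ — but a correct proof must either prove the corrected statement or note explicitly that the $k=k'$ case is required and missing. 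Everything else (the energy normalization $J\theta=Q=E$, the implicit hypothesis $\lambda<\theta$, the vacuity of the $u=0$ and $n'=n-1$ seam cases) is handled correctly.
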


It can be easily verified that the following remark proposed in \cite{Liu-Yang-20} is also valid for our proposed structure.
\begin{Remark}(Remark 5 of \cite{Liu-Yang-20})
   For any $q$-ary characteristic matrix with even $q$, $J$ should be even. This implies that any binary ($q = 2$) or quaternary ($q = 4$) characteristic matrix should have even $J$.
\end{Remark}

\textit{\textbf{Proposed Theorem}}: Consider a $(N_t,N_{\text{active}},\lambda,J,\theta)-\Omega$ training matrix as follows.
\begin{align}\label{eq-seed}
 \Omega=&\left(\Omega_1,\mathbf{0}_{N_t\times \lambda},\Omega_2,\mathbf{0}_{N_t\times \lambda},...,\Omega_J,\mathbf{0}_{N_t\times \lambda}\right),
      \end{align}
      where
      $$
\Omega_j=\left(
     \begin{array}{c}
       T^0\left(\mathcal{X}_j\mathbf{0}_{N_{\text{active}}\times\left(D-1\right)\theta}\right) \\
   T^\theta\left(\mathcal{X}_j\mathbf{0}_{N_{\text{active}}\times\left(D-1\right)\theta}\right)\\
       \vdots \\
   T^{\left(D-1\right)\theta}\left(\mathcal{X}_j\mathbf{0}
   _{N_{\text{active}}\times\left(D-1\right)\theta}\right)\\
     \end{array}
   \right)_{N_t\times D\theta}$$ and
   $$\mathcal{X}_j=\left(
        \begin{array}{c}
          \mathbf{a}_{1,j}\\
          \mathbf{a}_{2,j} \\
         \vdots \\
         \mathbf{a}_{N_{\text{active}},j}\\
        \end{array}
      \right)_{N_{\text{active}}\times\theta}
~~(1\leq j\leq J).$$

\begin{Theorem}\label{thm-opt-training}
The training matrix in Eq. (\ref{eq-seed}) is optimal if and only if $\{\mathcal{A}_1,\mathcal{A}_2,...,$ $\mathcal{A}_{N_{\text{active}}}\}$ is a $(N_{\text{active}},J,\theta,\lambda)$-SZCCS, where $\mathcal{A}_k=\{\mathbf{a}_{k,1},\mathbf{a}_{k,2},...,\mathbf{a}_{k,J}\}$ $(1\leq k\leq N_{\text{active}})$.
\end{Theorem}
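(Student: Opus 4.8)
The plan is to deduce the claim from Lemma~\ref{lem-criteria}, which already reduces the optimality of any training matrix of the form (\ref{eq-omi}) to the three correlation-sum identities (\ref{eq-condition-1})--(\ref{eq-condition-3}) on its constituent sequences $\{\mathbf{s}^j_{k,n}\}$, and then to verify that, for the special matrix in (\ref{eq-seed}), these identities coincide verbatim with conditions \textbf{C1} and \textbf{C2} of Definition~\ref{def-SZCCS} for the family $\{\mathcal{A}_1,\ldots,\mathcal{A}_{N_{\text{active}}}\}$ with parameters $(N_{\text{active}},J,\theta,\lambda)$. The structural observation that makes this work is that in (\ref{eq-seed}) the \emph{same} block $\mathcal{X}_j$ is reused in every layer $n=1,\ldots,D$, so that, in the notation of Section~\ref{sub-framework}, $\mathbf{s}^j_{k,n}=\mathbf{a}_{k,j}$ is independent of $n$, the flock size is $M=J$, and every constituent sequence has length $\theta$. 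I would also record the elementary identities $C_{\mathcal{A}_k}(u)=\sum_{j=1}^J\rho_{\mathbf{a}_{k,j}}(u)$ and $C_{\mathcal{A}_k,\mathcal{A}_{k'}}(u)=\sum_{j=1}^J\rho_{\mathbf{a}_{k,j},\mathbf{a}_{k',j}}(u)$ at the outset.

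Substituting $\mathbf{s}^j_{k,n}=\mathbf{a}_{k,j}$ into (\ref{eq-condition-1}) and (\ref{eq-condition-2}) turns them into $C_{\mathcal{A}_k}(u)=0$ for $1\le u\le\lambda$, and $C_{\mathcal{A}_k,\mathcal{A}_{k'}}(u)=0$ for $0\le|u|\le\lambda$ and $k\ne k'$; these are precisely the front-end parts of \textbf{C1} and \textbf{C2}, i.e.\ the requirements over $\mathcal{T}_1$ and $\mathcal{T}_1\cup\{0\}$ with $Z=\lambda$, $L=\theta$. For the third identity, (\ref{eq-condition-3}) becomes $\sum_{j=1}^J\rho_{\mathbf{a}_{k,j},\mathbf{a}_{k',j}}(\theta-u)=C_{\mathcal{A}_k,\mathcal{A}_{k'}}(\theta-u)=0$ for $1\le u\le\lambda$; since $u\mapsto\theta-u$ carries $\{1,\ldots,\lambda\}$ bijectively onto $\{\theta-\lambda,\ldots,\theta-1\}=\mathcal{T}_2$, the case $k=k'$ here is the tail-end auto-correlation ZCZ (the $\mathcal{T}_2$ part of \textbf{C1}) and the case $k\ne k'$ is the tail-end cross-correlation ZCZ (the $\mathcal{T}_2$ part of \textbf{C2}). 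The remaining, negative-shift, members of $\mathcal{T}_1$ and $\mathcal{T}_2$ demanded by Definition~\ref{def-SZCCS} follow for free by letting $(k,k')$ run over all ordered pairs and using the Hermitian symmetry $\rho_{\mathbf{a},\mathbf{b}}(u)=\rho^*_{\mathbf{b},\mathbf{a}}(-u)$ of Eq.~(\ref{eq-u}). Since $\{\mathcal{A}_k\}_{k=1}^{N_{\text{active}}}\mapsto\Omega$ is a bijection and the three conditions of Lemma~\ref{lem-criteria} have been matched one-to-one with \textbf{C1}--\textbf{C2}, the asserted equivalence follows.

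I expect the delicate step to be the tail-end identity (\ref{eq-condition-3}): one has to keep the argument reversal $u\mapsto\theta-u$ straight --- it records that consecutive layers of (\ref{eq-seed}) abut across a $\theta$-boundary, so that their interaction at a small lag $u$ probes the \emph{tail} lag $\theta-u$ of the underlying length-$\theta$ sequences --- and one must make sure the case $k=k'$ is admitted there, as that is exactly what produces the tail-end auto-correlation zone that is absent from (\ref{eq-condition-1}). (Here I am working in the relevant GSM regime $N_{\text{active}}<N_t$, so that $D=N_t/N_{\text{active}}\ge2$ and consecutive layers exist, and with $\lambda<\theta$ so that $\mathcal{T}_2$ is a genuine nonempty zone.) If one prefers an argument not routed through Lemma~\ref{lem-criteria}, the same outcome is reached by computing $\phi_{\mathbf{x}_i,\mathbf{x}_{i'}}(u)$ directly for $0\le u\le\lambda$: writing $i=(n-1)N_{\text{active}}+k$ and $i'=(n'-1)N_{\text{active}}+k'$, the copy of $\mathbf{a}_{k,j}$ sitting in row $\mathbf{x}_i$ starts at column $(j-1)(D\theta+\lambda)+(n-1)\theta$, consecutive starting columns differ by $D\theta+\lambda\ge\theta+\lambda$, and hence for $|u|\le\lambda$ there is no wrap-around and no interaction between distinct blocks $\Omega_j$; only $n'=n$ and $n'=n+1$ contribute nontrivially, the former producing $C_{\mathcal{A}_k,\mathcal{A}_{k'}}(u)$ and the latter $C^*_{\mathcal{A}_{k'},\mathcal{A}_k}(\theta-u)$, so that matching these against (\ref{eq-MSE}) reproduces \textbf{C1} and \textbf{C2} directly.
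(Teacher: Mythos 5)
Your proposal is correct and follows essentially the same route as the paper: substitute $\mathbf{s}^j_{k,n}=\mathbf{a}_{k,j}$ into the three conditions of Lemma~\ref{lem-criteria} and match them, via $u\mapsto\theta-u$ and the Hermitian symmetry of Eq.~(\ref{eq-u}), against conditions C1--C2 of Definition~\ref{def-SZCCS} with $L=\theta$, $M=J$, $Z=\lambda$. Your caution about admitting $k=k'$ in Eq.~(\ref{eq-condition-3}) is well placed: as printed, Lemma~\ref{lem-criteria} restricts that condition to $k\ne k'$, yet the paper's Eq.~(\ref{eq-result-c1}) silently relies on the $k=k'$, adjacent-layer case (which involves two \emph{distinct} rows of $\Omega$ and is therefore genuinely forced by Eq.~(\ref{eq-MSE})) to obtain the tail-end auto-correlation zone --- exactly the point your direct computation of $\phi_{\mathbf{x}_i,\mathbf{x}_{i'}}$ makes explicit.
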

\begin{proof}
Note that
$
\mathbf{s}^j_{k,n}=\mathbf{a}_{k,j},
 ~\hbox{for}~1\leq k\leq N_{\text{active}}, 1\leq n\leq N_t/N_{\text{active}},~\hbox{and}~1\leq j\leq J,
$
and so
\begin{align}\nonumber
&\rho_{\mathbf{s}^j_{k,n},\mathbf{s}^j_{k',n'}}(u)=\rho_{\mathbf{a}_{k,j},\mathbf{a}_{k',j}}(u),~\hbox{for}~ \forall u,
~ 1\leq k, k'\leq N_{\text{active}},\\\label{eq-thm}
&\hspace{2.5cm}\hbox{and}
~ 1\leq n,n'\leq  N_t/N_{\text{active}}.
\end{align}
Hence, Eqs. (\ref{eq-condition-1})-(\ref{eq-condition-3}) in \textbf{Lemma \ref{lem-criteria}} can be further expressed as follows.
\begin{eqnarray}
\nonumber
 \sum_{j=1}^J\rho_{\mathbf{a}_{k,j}}(u)=0,\hspace{2.5cm}\\\label{eq-result-c1}
 \hbox{for~}u\in\{1,...,\lambda\}\bigcup\{\theta-
 \lambda,...,\theta-1\},1\leq k\leq N_{\text{active}};\\
\nonumber
 \sum_{j=1}^J\rho_{\mathbf{a}_{k,j},\mathbf{a}_{k',j}}(u)=0,\hspace{2.2cm}\\
 \nonumber
\hbox{for~}u\in\{0\}\bigcup
 \{1,...,\lambda\}\bigcup\{\theta-
 \lambda,...,\theta-1\},\\ \label{eq-result-c2}1\leq k\ne k'\leq N_{\text{active}}.
\end{eqnarray}
Combining Eqs. (\ref{eq-result-c1})-(\ref{eq-result-c2}) and \textbf{Definition \ref{def-SZCCS}}, it can be observed that $\{\mathcal{A}_1,\mathcal{A}_2,...,$ $\mathcal{A}_{N_{\text{active}}}\}$ is a $(N_{\text{active}},J,\theta,\lambda)$-SZCCS.
\end{proof}

The training matrix $\Omega$ can be enlarged when $J'$ is a multiple of $J$, i.e., $J'=tJ$ for some integer $t$, $\Omega$ can be enlarged to $(N_t,N_{\text{active}},J',\theta)-\Omega'$ as
$$\Omega'=\left(\underbrace{\begin{array}{cccc}
         \Omega & \Omega & \cdots & \Omega\\
        \end{array}}_{t\times\Psi}
        \right)_{N_t\times tJ(D\theta+\lambda)}.
$$

\subsection{Minimum MSE for the Proposed Training Framework}

In this subsection, we show the minimum MSE related to the sequence length of $\mathcal{S}$, $E_b/N_0$ and the numbers of transmit antennas and activated antennas and multi-paths which are denoted by $N_t$, $N_{\text{active}}$ and $\lambda+1$, respectively, under the training framework proposed in \emph{Subsection A} with each training sequence having energy $E$.

Note that the sequence length of training sequence is $L= N_t/N_{\text{active}}\times(\lambda+\theta)$. Then, the energy per bit is
$E_b=\frac{E}{L}=\frac{E}{ N_t/N_{\text{active}}\times(\lambda+\theta)},$
where
\begin{equation}\label{MSE-EBN0}
\frac{E_b}{N_0}=\frac{E}{2\sigma^2_{\omega} N_t/N_{\text{active}}\times(\lambda+\theta)},
\end{equation}
$N_0=2\sigma^2_{\omega}$ is the power spectral density of the white complex Gaussian noise. Combining Eqs. (\ref{MSE}) and (\ref{MSE-EBN0}), we have that
\begin{equation}\label{MSE-new}
 \hbox{minimum~MSE}=\left(2 \left(N_t/N_{\text{active}}\right)(\lambda+\theta)E_b /N_0\right)^{-1}.
\end{equation}
Therefore, for given $E_b /N_0$ and $N_t$, the minimum MSE decreases with $\lambda$ and $\theta$, and increases with $N_{\text{active}}$.
\subsection{Numerical Evaluation}

In this subsection, we analyse the performance of the proposed SZCCSs in \textbf{Therorem \ref{thm-optimalcon}} and \textbf{Theorem \ref{thm-SZCCS-non1}} as the training sequences for GSM systems over frequency-selective channels, based on the framework proposed in the \emph{Subsection \ref{sub-framework}} for different settings of $E_b/N_0$, number of multi-paths $\lambda+1$, number of activated antennas $N_{\text{active}}$ and number of zero time-slots $\theta$. Consider the same frequency-selective channel in \cite{Liu-Yang-20}, where the $\lambda+1$-path channel (separated by integer symbol durations) has uniform power delay profile as
$
h[t]=\sum_{n=0}^{\lambda} h_{i} \delta[t-n T]
$
where $h_{i}$'s are complex-valued Gaussian random variables with
zero mean and $\mathbb{E}\left(\left|h_{i}\right|^{2}\right)=1$.

\subsubsection{MSE comparison with the change of $E_b/N_0$}
We consider a generic training-based single carrier GSM transmission structure with $N_t=4$ TAs, one receive antenna and $N_{\text{active}}=2$ RF chains, and a 6-path channel. Let the training matrix of the random binary sequences for $J=2$ be
$\label{eq-struc}
    \Omega_r
             =\left(
               \begin{array}{cccccc}
                 \mathbf{s}_{r,1} & \mathbf{0}_{1\times32} &\mathbf{0}_{1\times5} & \mathbf{s}_{r,2} & \mathbf{0}_{1\times32} & \mathbf{0}_{1\times5} \\
                 \mathbf{s}_{r,3} & \mathbf{0}_{1\times32} & \mathbf{0}_{1\times5} & \mathbf{s}_{r,4} & \mathbf{0}_{1\times32} & \mathbf{0}_{1\times5}\\
                 \mathbf{0}_{1\times32} & \mathbf{s}_{r,1}& \mathbf{0}_{1\times5} & \mathbf{0}_{1\times32} &\mathbf{s}_{r,2} & \mathbf{0}_{1\times5} \\
                 \mathbf{0}_{1\times32} & \mathbf{s}_{r,3} & \mathbf{0}_{1\times5} & \mathbf{0}_{1\times32} & \mathbf{s}_{r,4} & \mathbf{0}_{1\times5} \\
               \end{array}
             \right),
 $
     where $\mathbf{s}_{r,1},\mathbf{s}_{r,2},\mathbf{s}_{r,3},\mathbf{s}_{r,4}$ are randomly generated binary sequences of length 32.  The aperiodic correlation sums of $\{\mathbf{s}_{r,1},\mathbf{s}_{r,2},\mathbf{s}_{r,3},\mathbf{s}_{r,4}\}$ is presented in Fig. \ref{fig-ran}. It can be observed that there is no ZCZ in the these correlation sums, which implies $\Omega_r$ can not satisfy the optimal conditions in Eqs. (\ref{eq-condition-1})-(\ref{eq-condition-3}).
     Let $\Omega_{s}$ be the training matrix of the proposed SZCCSs for $J=2$, which can be generated by replacing $\mathbf{s}_{r,1},\mathbf{s}_{r,2},\mathbf{s}_{r,3},\mathbf{s}_{r,4}$ by $\mathbf{s}_{s,1},\mathbf{s}_{s,2},\mathbf{s}_{s,3},\mathbf{s}_{s,4}$, respectively,
     where $\mathbf{s}_{s,1},\mathbf{s}_{s,2},\mathbf{s}_{s,3},\mathbf{s}_{s,4}$ are the sequences in the first two sequence sets of the proposed optimal $(8,2,32,7)$-SZCCS generated in \textbf{Example \ref{eg-optimal}}, where
      \begin{eqnarray*}
       \mathbf{s}_{s,1} &=& (+,	+,	+,	-,	+,	+,	-,	+,	+,	+,	+,	-,	-,	-,	+,	-,	+,\\&&	+,	+,	-,	+,	+,	-,	+,	+,	+,	+,	-,	-,	-,	+,	-),\\
       \mathbf{s}_{s,2} &=& (+,	-,	+,	+,	+,	-,	-,	-,	+,	-,+,	+,	-,	+,	+,	+,	+,\\&&	-,	+,	+,	+,	-,	-,	-,	+,	-,	+	,+	,-,+,	+,	+),\\
       \mathbf{s}_{s,3} &=& (+,	+,	+,	-,	-,	-,	+,	-,	-,	-,	-,	+,	-,	-,	+,	-,	+,\\&&	+,	+,	-,	-,	-,	+,	-,	-,	-,	-,	+,	-,		-,+,	-), \\
       \mathbf{s}_{s,4} &=& (+,	-,	+,	+,	-,	+,	+,	+,	-,	+,	-,	-,	-,	+,	+,	+,	+,\\&&	-,	+,	+,	-,	+,	+,	+,	-,	+,	-,	-,	-,	+,	+,	+).
     \end{eqnarray*}The aperiodic correlation sums of $\{\mathbf{s}_{s,1},\mathbf{s}_{s,2},\mathbf{s}_{s,3},\mathbf{s}_{s,4}\}$ is presented in Fig. \ref{fig-szccs}, where shows that there are symmetric ZCZs of width greater or equal to 7 in these aperiodic correlation sums. Hence,

     $\{\mathbf{s}_{s,1},\mathbf{s}_{s,2},\mathbf{s}_{s,3},\mathbf{s}_{s,4}\}$ satisfies the optimal conditions in Eqs. (\ref{eq-condition-1})-(\ref{eq-condition-3}) when $\lambda\leq7$. The argument for the cases that $J=6,18$ are the same. In Fig. \ref{fig-SNR} we evaluate the channel estimation MSE performances of $\Omega_{r}$ and $\Omega_{s}$ when $J=2,6,18$ and the $E_b/N_0$ runs over $\{0,2,...,20\}$.
     It can be observed that  when the number of multi-paths is 6, the proposed $(8,2,32,7)$-SZCCS can be used to design optimal GSM training matrix which attains the MSE lower bound, which outperforms that of random binary sequences. Those performances are consistent with the discussions of Figs. \ref{fig-ran} and \ref{fig-szccs} above. Moreover, the MSE decreases when $J$ increases, which can be seen as the length of training sequences increases.
\begin{figure}[!htbp]
  \centering
\begin{minipage}[c]{0.45\textwidth}
  \centering
  \includegraphics[width=3.25in]{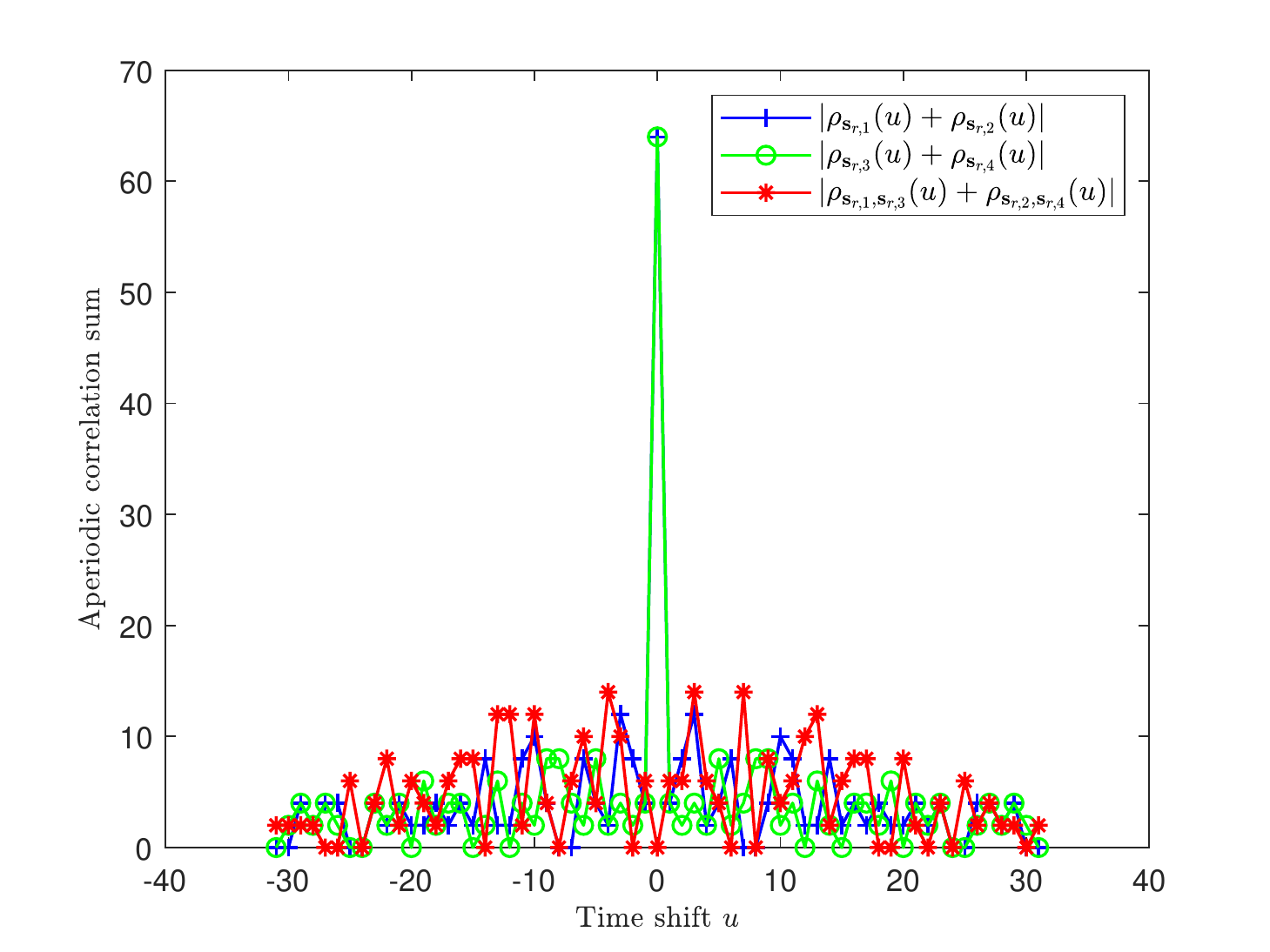}\\
  \caption{Aperiodic correlation sums of random training matrix}\label{fig-ran}
\end{minipage}
\begin{minipage}[c]{0.45\textwidth}
  \centering
  \includegraphics[width=3.25in]{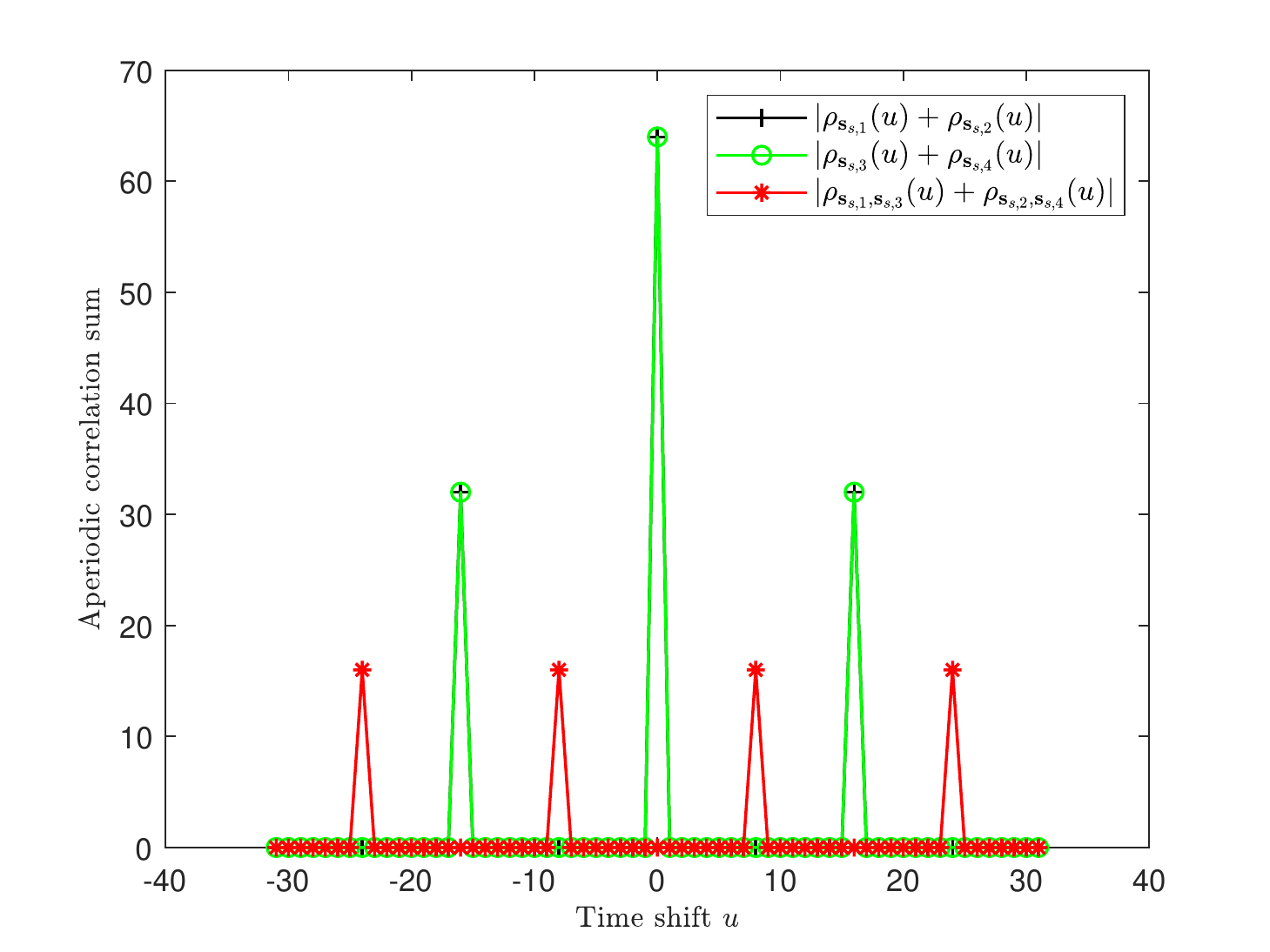}\\
  \caption{Aperiodic correlation sums of the training matrix of the proposed SZCCS}\label{fig-szccs}
  \end{minipage}
\end{figure}
\begin{figure}[htbp]
  \centering
  \begin{minipage}[c]{0.45\textwidth}
  \centering
  \includegraphics[width=3.25in]{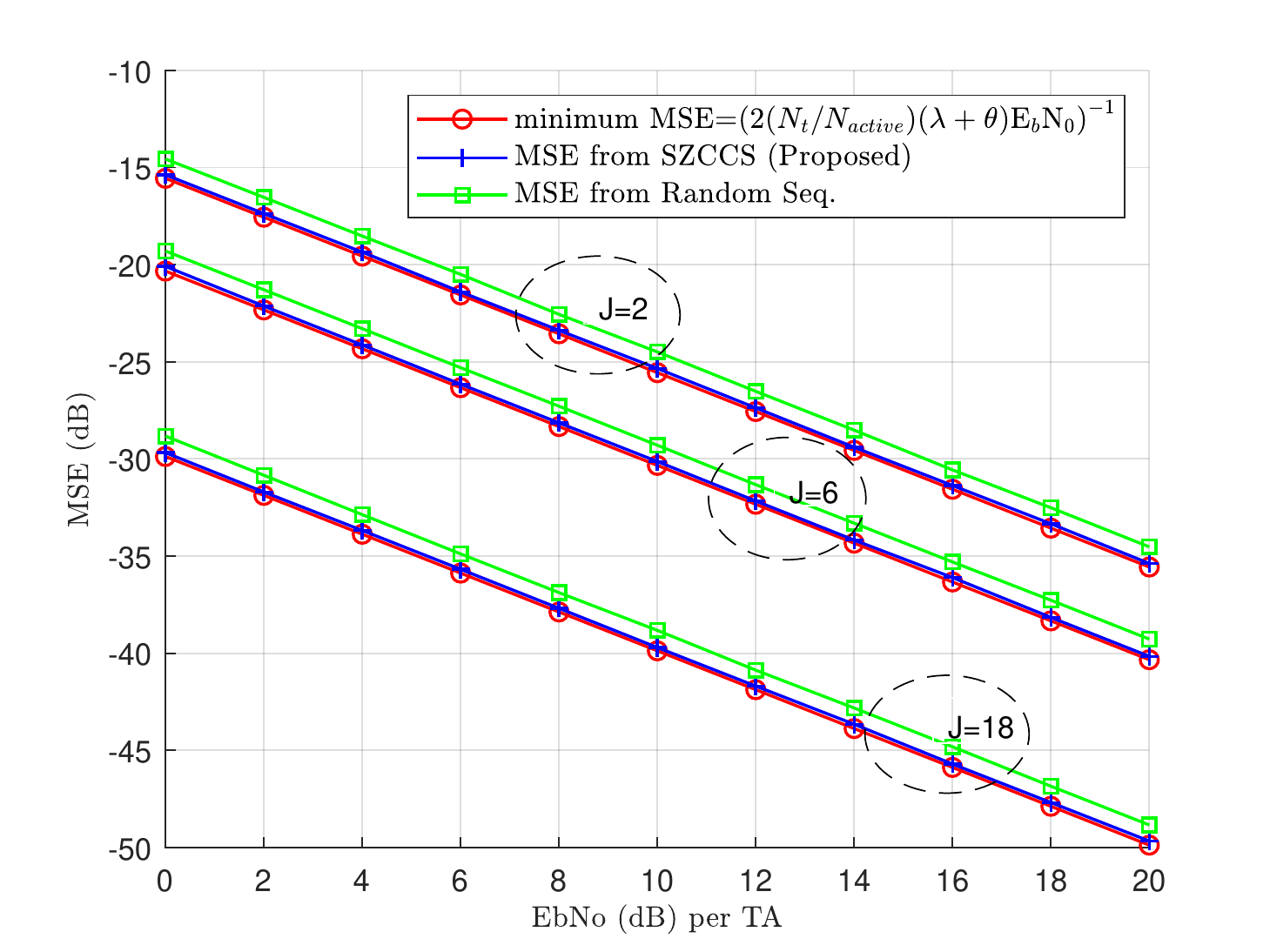}\\
  \caption{MSE comparison, for $(8,2,16,7)$-SZCCS , where the number of multi-paths is 6, $N_t=4,N_{\text{active}}=2$ and $J\in\{2,6,18\}$}\label{fig-SNR} \end{minipage}
\begin{minipage}[c]{0.45\textwidth}
  \centering
  \includegraphics[width=3.25in]{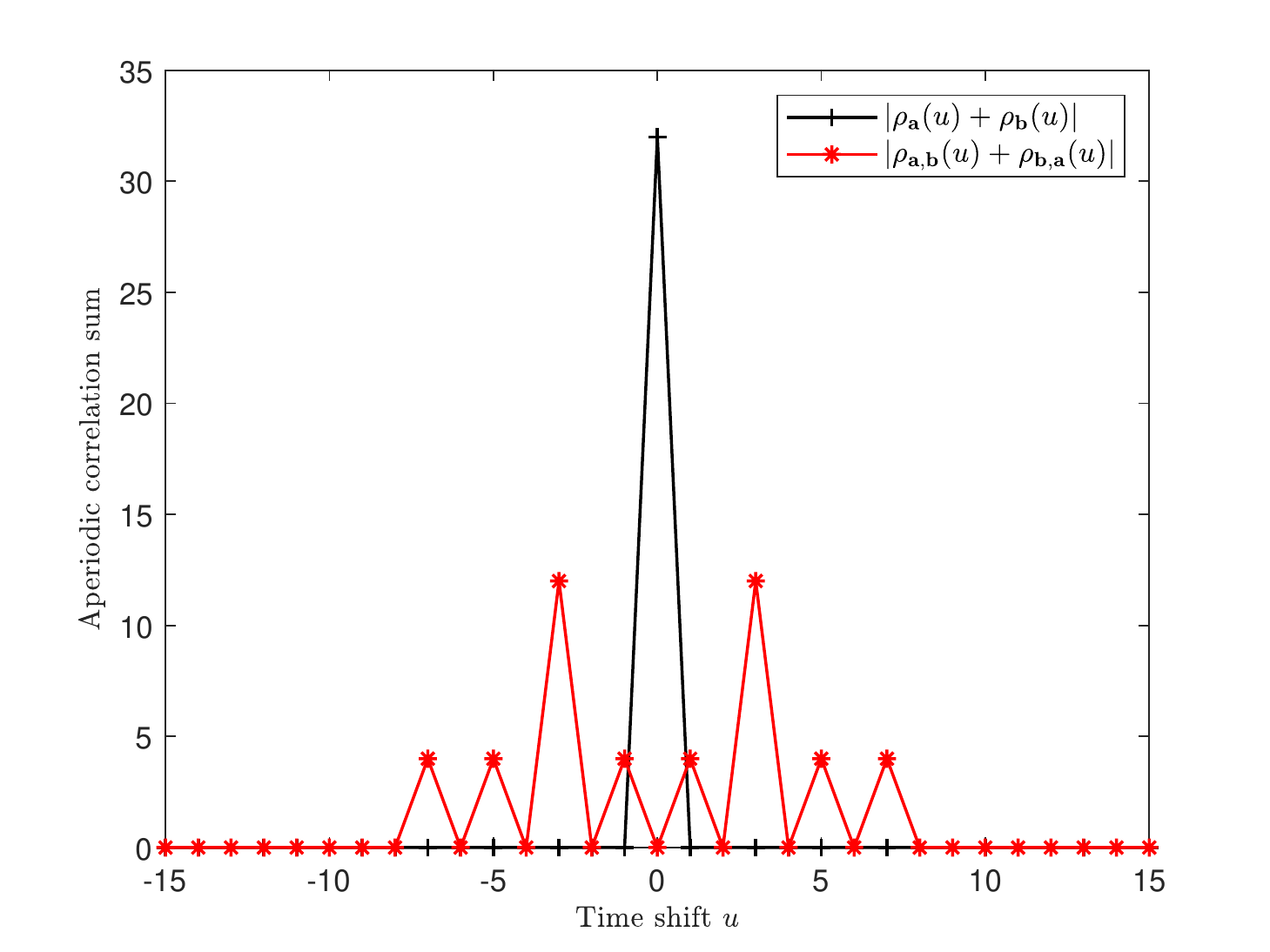}\\
  \caption{Aperiodic correlation sums of the training matrix of (16,8)-CZCP}\label{fig-czcp}
    \end{minipage}
\end{figure}

\subsubsection{MSE comparison with the change of the number of multi-paths}
We evaluate the channel estimation MSE performances for different values of multi-paths for $E_b/N_0=16$ dB with 4 transmit antennas and 2 RF chains in Fig. \ref{fig-multipath}. We employ the first two sequence sets of the optimal (8,2,32,7)-SZCCS constructed through \textbf{Theorem \ref{thm-optimalcon}}  and the (2,2,24,7)-SZCCS in \textbf{Theorem \ref{thm-SZCCS-non1}}  to generate our GSM training matrices. We compare  their channel estimation performances with those of the (16,8)-CZCP given in \cite{Liu-Yang-20}, the Zadoff-Chu sequences of length 32 and binary random sequences of length 32.
Let the training matrix of (16,8)-CZCP $\{\mathbf{a},\mathbf{b}\}$ be
$$
    \Omega_c
             =\left(
               \begin{array}{cccccc}
                 \mathbf{a} & \mathbf{0}_{1\times16} &\mathbf{0}_{1\times\lambda} & \mathbf{b} & \mathbf{0}_{1\times16} & \mathbf{0}_{1\times\lambda} \\
                 \mathbf{b} & \mathbf{0}_{1\times16} & \mathbf{0}_{1\times\lambda} & \mathbf{a} & \mathbf{0}_{1\times16} & \mathbf{0}_{1\times\lambda}\\
                 \mathbf{0}_{1\times16} & \mathbf{a}& \mathbf{0}_{1\times\lambda} & \mathbf{0}_{1\times16} &\mathbf{b} & \mathbf{0}_{1\times\lambda} \\
                 \mathbf{0}_{1\times16} & \mathbf{b} & \mathbf{0}_{1\times\lambda} & \mathbf{0}_{1\times16} & \mathbf{a}& \mathbf{0}_{1\times\lambda} \\
               \end{array}
             \right),
 $$
     where
     \begin{eqnarray*}
      \mathbf{a}&=&(1, 1, 1, -1, 1, 1, -1 ,1 , 1, -1, 1, 1, 1,-1, -1, -1)\\
     \mathbf{b}&=&(1, 1, 1, -1, 1, 1, -1, 1,-1, 1,-1, -1 ,-1,1, 1, 1).
     \end{eqnarray*}
 The aperiodic correlation sums of $\{\mathbf{a},\mathbf{b}\}$ is shown in Fig. \ref{fig-czcp}, it can be observed that $\{\mathbf{a},\mathbf{b}\}$ satisfies optimal conditions in Eqs. (\ref{eq-condition-1}) and (\ref{eq-condition-3}) without satisfying Eq. (\ref{eq-condition-2}) when $\lambda\leq8$. Let the training matrix of four Zadoff-Chu sequences $\mathbf{s}_{zc,1},\mathbf{s}_{zc,2},\mathbf{s}_{zc,3},\mathbf{s}_{zc,4}$ be
$$
    \Omega_{zc}
             =\left(
               \begin{array}{cccccc}
                 \mathbf{s}_{zc,1} & \mathbf{0}_{1\times32} &\mathbf{0}_{1\times\lambda} & \mathbf{s}_{zc,2} & \mathbf{0}_{1\times32} & \mathbf{0}_{1\times\lambda} \\
                 \mathbf{s}_{zc,3} & \mathbf{0}_{1\times32} & \mathbf{0}_{1\times\lambda} &\mathbf{s}_{zc,4} & \mathbf{0}_{1\times32} & \mathbf{0}_{1\times\lambda}\\
                 \mathbf{0}_{1\times32} & \mathbf{s}_{zc,1}& \mathbf{0}_{1\times\lambda} & \mathbf{0}_{1\times32} &\mathbf{s}_{zc,2} & \mathbf{0}_{1\times\lambda} \\
                 \mathbf{0}_{1\times32} & \mathbf{s}_{zc,3} & \mathbf{0}_{1\times\lambda} & \mathbf{0}_{1\times32} & \mathbf{s}_{zc,4}& \mathbf{0}_{1\times\lambda} \\
               \end{array}
             \right),
     $$
     where
      $\mathbf{s}_{zc,1},\mathbf{s}_{zc,2},\mathbf{s}_{zc,3},\mathbf{s}_{zc,4}$ are Zadoff-Chu sequences of length 32 generated with roots being 1, 3, 5, 7, respectively.
 The aperiodic correlation sums of $\{\mathbf{s}_{zc,1},\mathbf{s}_{zc,2},\mathbf{s}_{zc,3},\mathbf{s}_{zc,4}\}$ is shown in Fig. \ref{fig-czcp}, which implies that $\{\mathbf{s}_{zc,1},\mathbf{s}_{zc,2},\mathbf{s}_{zc,3},$ $\mathbf{s}_{zc,4}\}$ has no ZCZs and can not satisfy optimal conditions in Eqs. (\ref{eq-condition-1})-(\ref{eq-condition-3}) for any $\lambda>0$.
 Fig. \ref{fig-multipath} shows that the proposed SZCCSs lead to optimal GSM training matrices which attain the MSE lower bound when the  number of multi-paths $\lambda+1$ is less than or equal to $Z+1$. Note that the minimum MSE decreases with the number of multi-paths $\lambda+1$, which is consistent with Eq. (\ref{MSE-new}). It is noted that the MSE performance corresponding to CZCP, which is proposed in \cite{Liu-Yang-20} for optimal training of SM systems, is close to that of the Zadoff-Chu sequence and binary random sequence, indicating that CZCP may not be suitable for training matrix design in GSM systems.
\begin{figure}[!htbp]
  \centering
  \begin{minipage}[c]{0.45\textwidth}
  \centering
  \includegraphics[width=3.25in]{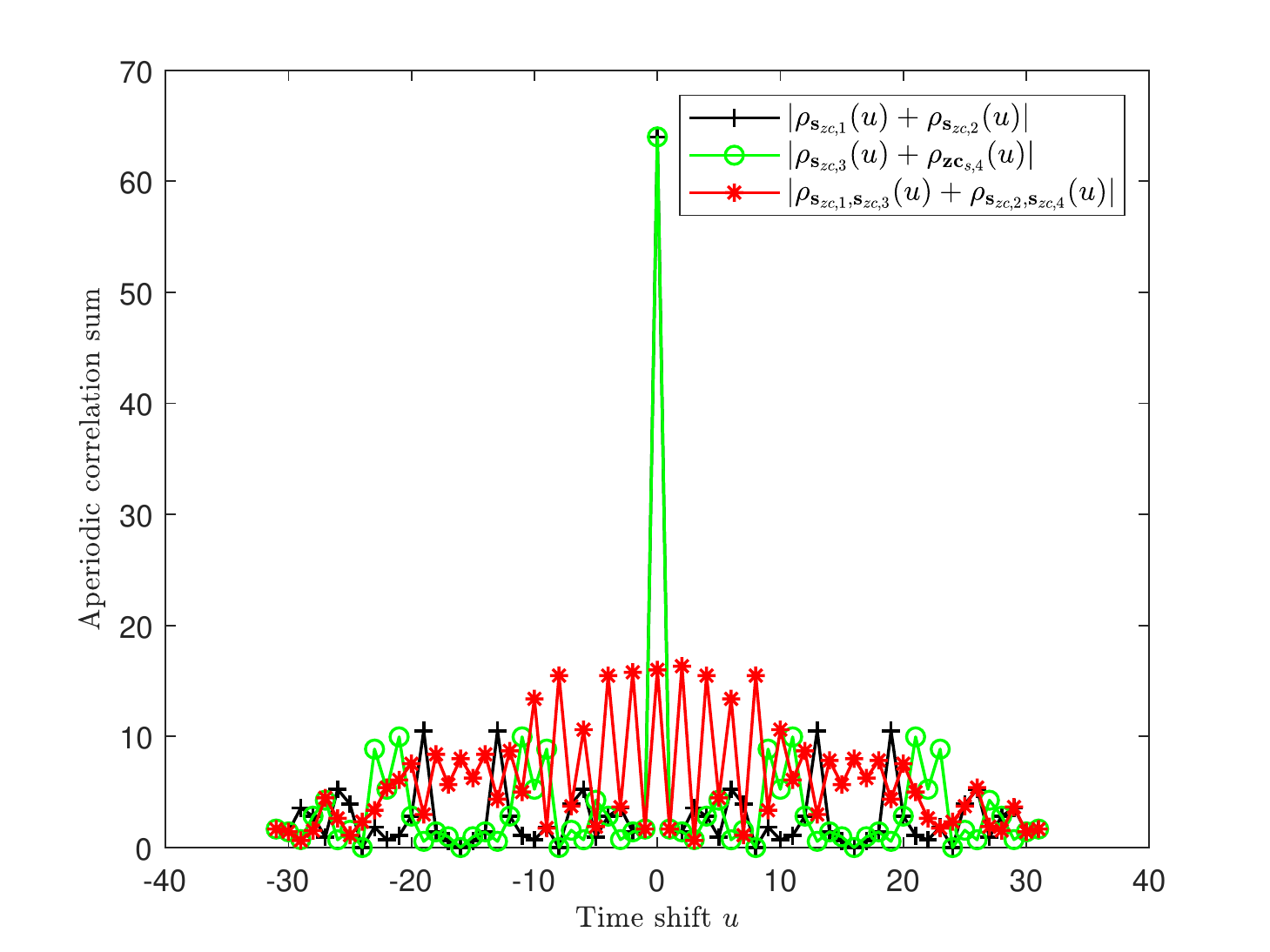}\\
  \caption{Aperiodic correlation sums of the training matrix of Zadoff-Chu sequences}\label{fig-zc}
\end{minipage}
  \begin{minipage}[c]{0.45\textwidth}
  \centering
  \includegraphics[width=3.25in]{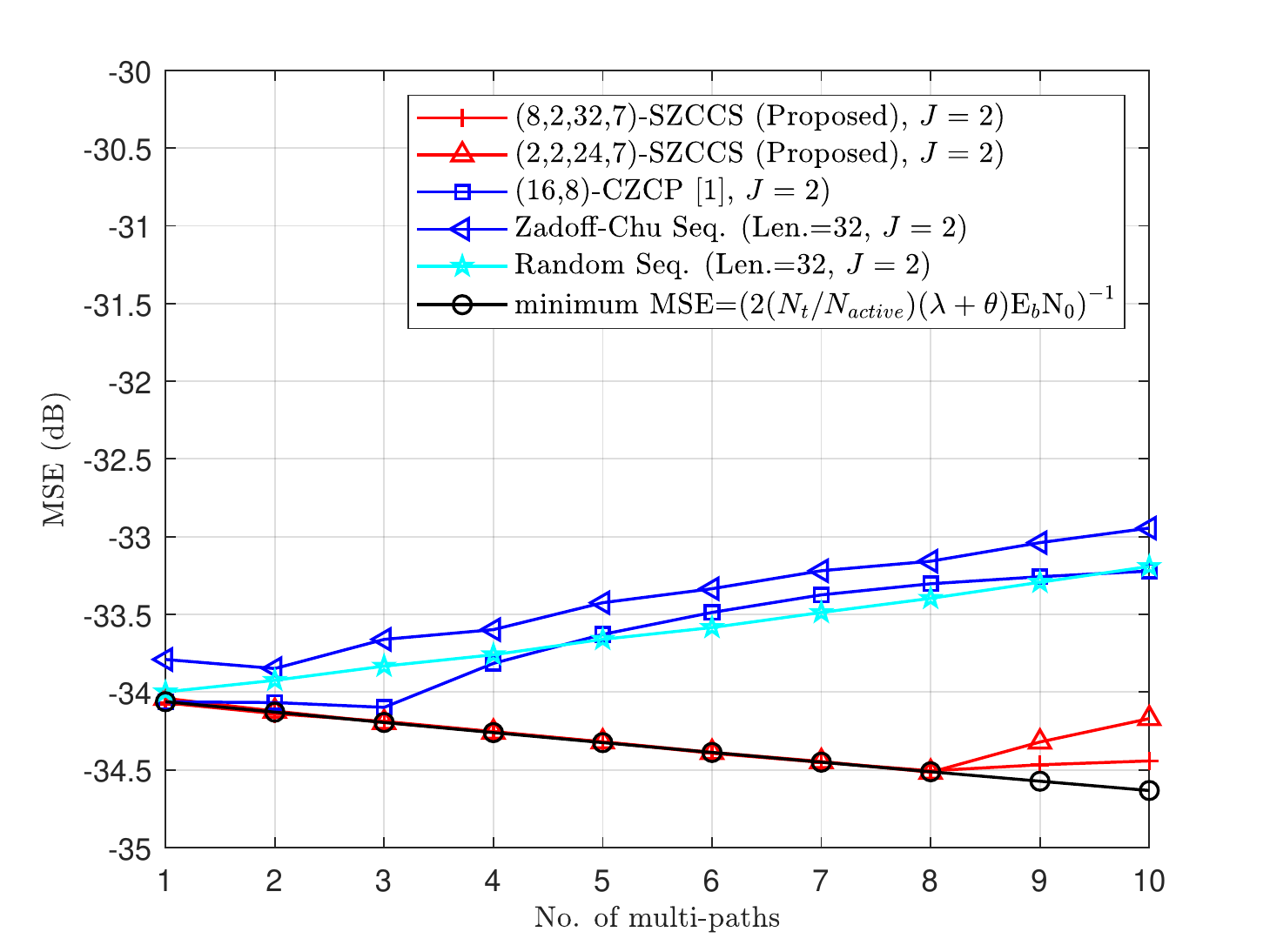}\\
  \caption{MSE comparison among different sequences for various multi-paths with $N_t=4,N_{\text{active}}=2$ and $\text{E}_b\text{N}_0$=16 dB.}\label{fig-multipath}
\end{minipage}
\end{figure}
\begin{figure}[!htbp]
  \centering
  \begin{minipage}[c]{0.45\textwidth}
  \centering
  \includegraphics[width=3.25in]{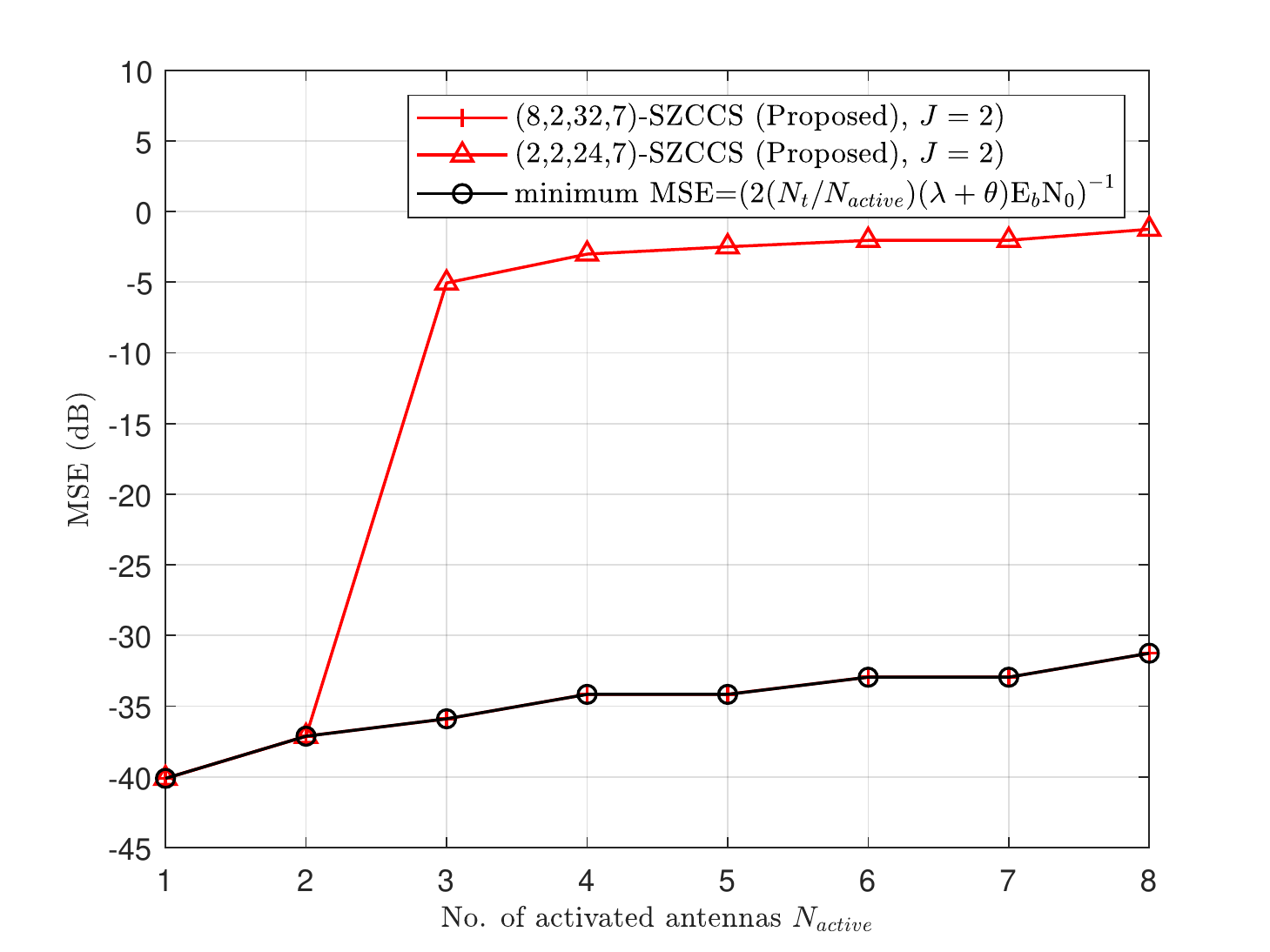}\\
  \caption{MSE comparison of the proposed proposed SZCCSs for various $N_{\text{active}}$ with $E_b/N_0$=16 dB and $\lambda=3$.}\label{fig-actived}
\end{minipage}
  \begin{minipage}[c]{0.45\textwidth}
  \centering
  \includegraphics[width=3.25in]{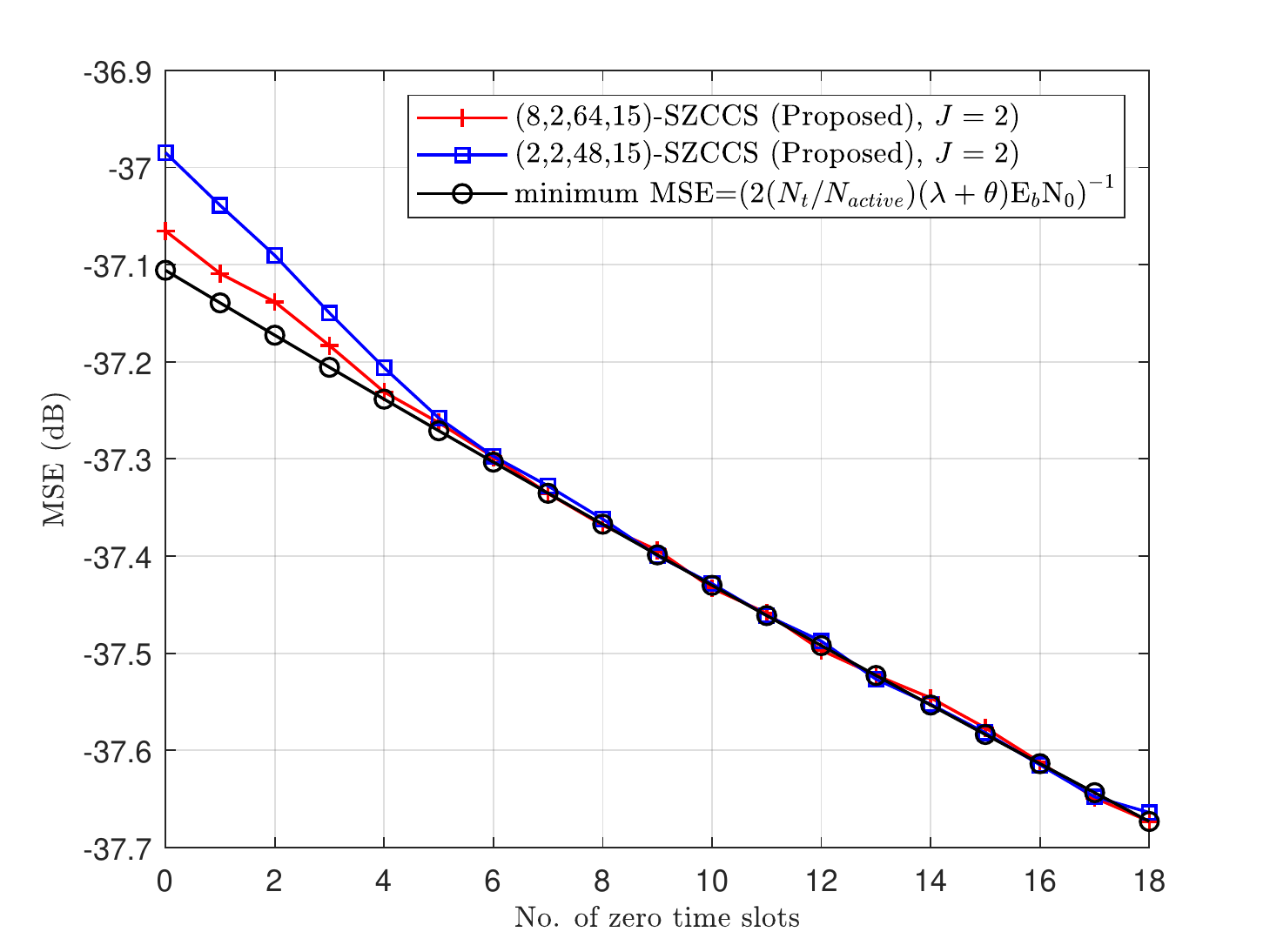}\\
  \caption{MSE comparison among the proposed SZCCSs for different values of zero-time slot with $E_b/N_0$=16 dB and $\lambda=15$.}\label{fig-zero}
  \end{minipage}
\end{figure}
\subsubsection{MSE comparison with the change of the number of activated antennas}
 Fig. \ref{fig-actived} shows that every GSM training matrix corresponding to  the proposed SZCCSs attains the MSE lower bound when the number of the activated antennas $N_{\text{active}}$ is less than or equal to the size $K$ of the SZCCS, and the minimum MSE increases with the number of activated antennas $N_{\text{active}}$.

\subsubsection{MSE comparison with the change of the number of zero time-slots}Finally, we evaluate the channel estimation MSE performances of the proposed $(8,2,64,15)$-SZCCS and $(2,2,48,15)$-SZCCS for different values of zero-time slots with fixed $\lambda=15$ in Fig. \ref{fig-zero}. One observes that reducing or removing the zero-time slot has little influence on the performance of the proposed GSM training matrices. Hence, in practical application, the zero-time slot may be removed to improve the spectral efficiency.
\section{Concluding Remarks}

In this paper, we have proposed a new class of code sets called symmetrical Z-complementary code sets which represents a stringent subclass of Z-complementary code sets. Unlike the conventional Z-complementary code sets where the aperiodic auto-correlation sum and aperiodic cross-correlation sum only have front-end zero-correlation zone properties, symmetrical Z- complementary code sets have an additional tail-end zero-correlation zone property. With the aid of generalized Boolean function, we have constructed two classes of symmetrical Z-complementary code sets including an optimal one with respect to the set size upper bound.

Moreover, we have shown that symmetrical Z-complementary code sets can be employed as training sequences for the optimal channel estimation in broadband generalized spatial modulation systems. We point out that the training sequences for conventional multiple-input multiple-output systems where the number of ratio-frequency chains equals that of transmit antennas is inapplicable in generalized spatial modulation. The same may be said for spatial modulation systems with a single ratio-frequency chain. We have introduced a generic design of optimal generalized spatial modulation training matrix using symmetrical Z-complementary code sets, where the inter-symbol interference and inter-channel interference can be minimized by the zero aperiodic auto-correlation sums and aperiodic cross-correlation sums in front-end and tail-end zero-correlation zones. Note that the number of ratio-frequency chains that can be supported is limited by the size of the applied symmetrical Z-complementary code sets, and the maximum tolerable number of multi-paths depends on the width of zero-correlation zones of aperiodic correlation sums for that symmetrical Z-complementary code set. We have shown that the designed training matrix leads to minimum channel estimation mean-squared-error in frequency-selective channels.
\appendices
\section{Proof of \textbf{Theorem \ref{thm-optimalcon}}}
\begin{proof}
  According to Eq. (\ref{eq-u}) and \textbf{Definition \ref{def-SZCCS}}, this proof can be divided into two steps. In the first step, it should be demonstrated that
  \begin{eqnarray}\nonumber
   C_{\mathcal{S}_k}(u)=\rho_{\psi(\mathbf{a}_k)}(u)+\rho_{\psi(\mathbf{b}_k)}(u)=0,\\
   \label{proof-au}\hbox{for}~u\in \mathcal{T}_1\bigcup\mathcal{T}_2,~1\leq k\leq8,
  \end{eqnarray}
  where $\mathcal{T}_1=\{1,2,\cdots,2^{m-2}-1\}$ and $\mathcal{T}_2=\{2^m+1-2^{m-2},2^m+2-2^{m-2},\cdots,2^m-1\}$. In the second step, it needs to prove that
   \begin{eqnarray}\nonumber
    C_{\mathcal{S}_k,\mathcal{S}_n}(u)=\rho_{\psi(\mathbf{a}_k),\psi(\mathbf{a}_n)}(u)
    +\rho_{\psi(\mathbf{b}_k),\psi(\mathbf{b}_n)}(u)=0,\\
   \label{eq-cro}\hbox{for}~u\in \{0\}\bigcup\mathcal{T}_1\bigcup\mathcal{T}_2,~1\leq k\neq n\leq8.
  \end{eqnarray}
   For any integers $i$ and $u$, let $j=i+u$; also let $(i_{1},i_{2},\cdots,i_{m})$ and $(j_{1},j_{2},\cdots,j_{m})$ be the binary representations of $i$ and $j$, respectively. Without loss of generality, we only discuss the cases when $k,n=1,2$ here.
  \begin{description}
    \item[Step 1]: When $\mathbf{D}_1=(0,0,0,0)$, it is clear that $a_1(\mathbf{x})=f(\mathbf{x})$, then, by \textbf{Lemma \ref{lem-GCP}}, we have that $\mathcal{S}_1=\{\psi(\mathbf{a}_1),\psi(\mathbf{b}_1)\}$ is a GCP of $2^m$ and
          \begin{align*}
   &\rho_{\psi(\mathbf{a}_1)}(u)+\rho_{\psi(\mathbf{b}_1)}(u)=0,\hspace{1cm}\\&\hbox{for}~|u|\in \{1,2,\cdots,2^m-1\}\supseteq\mathcal{T}_1\bigcup \mathcal{T}_2.
  \end{align*}
  When $\mathbf{D}_2=(1,0,1,0)$, then, we have
  \begin{align*}
  a_2(\mathbf{x})=& \frac{q}{2}\sum_{s=1}^{m-2}x_{\pi(s)}
  x_{\pi(s+1)}+\sum_{s=1}^{m}\mu_{s}
  x_{s}+\mu
\\
    &+\left(\frac{q}{2} x_{\pi(m-2)}+\frac{q}{2}x_{\pi(m-2)}x_{\pi(m)}\right)\\
  &+\left(\frac{q}{2} x_{\pi(m)}+\frac{q}{2}x_{\pi(m-1)}x_{\pi(m)}\right),\\
   b_2(\mathbf{x})=& a_2(\mathbf{x})
  +\frac{q}{2}x_{\pi(1)}.
\end{align*}
  For $u\in \mathcal{T}_1\bigcup \mathcal{T}_2$, Eq. (\ref{proof-au}) can be further expressed as
  \begin{eqnarray}\label{eq-AAF-1}
   \sum_{i=0}^{2^{m}-1-u}\left(\xi^{a_2(i)-a_2(j)}
   +
   \xi^{b_2(i)-b_2(j)}\right)=0,
  \end{eqnarray}
 where
  \begin{align*}
   a_2(i)-a_2(j) =& \frac{q}{2}\sum_{s=1}^{m-2}\left(i_{\pi(s)}
  i_{\pi(s+1)}-j_{\pi(s)}
  j_{\pi(s+1)}\right)\\&+\sum_{s=1}^{m}\mu_{s}
  \left(i_{s}-j_{s}\right)\\&+\frac{q}{2} (i_{\pi(m-2)}-j_{\pi(m-2)})\\
  &+\frac{q}{2}(i_{\pi(m-2)}i_{\pi(m)}-j_{\pi(m-2)}j_{\pi(m)})
  \end{align*}
    \begin{align*}
  &+\frac{q}{2} (i_{\pi(m)}-j_{\pi(m)}) \\
  &+\frac{q}{2}(i_{\pi(m-1)}i_{\pi(m)}-j_{\pi(m-1)}j_{\pi(m)}),\\
 b_2(i)-b_2(j) =&  a_2(i)-a_2(j)+\frac{q}{2}\left(i_{\pi(1)}-j_{\pi(1)}\right).
  \end{align*}
  If $i_{\pi(1)}\ne j_{\pi(1)}$, it is straightforward that $\xi^{a_2(i)-a_2(j)}=-\xi^{b_2(i)-b_2(j)}$ and $\xi^{a_2(i)-a_2(j)}+\xi^{b_2(i)
        -b_2(j)}=0.$

  If $i_{\pi(1)}=j_{\pi(1)}$, we have $a_2(i)-a_2(j) = b_2(i)-b_2(j)$. Let $t$ be the smallest integer such that $i_{\pi(t)}\ne j_{\pi(t)}$, obviously, $2\leq t\leq m-2$, otherwise, since $\{\pi(1),\pi(2),\cdots,\pi(m-2)\}=\{1,2,\cdots,m-2\}$, then, we have
  $u\in\{2^{m-2},2^{m-1},2^{m-1}+2^{m-2}\}$, which is contradict with $u\in\mathcal{T}_1\bigcup\mathcal{T}_2$.
        Let $i'$ and $j'$ be integers which are different from $i$ and $j$ in only one position $\pi(t-1)$, i.e., $i'_{\pi(t-1)}=1-i_{\pi(t-1)}$ and $j'_{\pi(t-1)}=1-j_{\pi(t-1)}$ respectively, and so $j'=i'+u$, $0\leq i',j'\leq 2^{m}-1$ and $i'_{\pi(1)}=j'_{\pi(1)}$.
Note that
\begin{align*}
a_{2}(i')-a_{2}(j')=&
a_{2}(i)-a_{2}(j)
+\frac{q}{2}
(1-2i_{\pi(t-1)})i_{\pi(t)}\\
  & -
\frac{q}{2}(1-2j_{\pi(t-1)})j_{\pi(t)}\\
\equiv &a_{2}(i)-a_{2}(j)+{q\over 2}~~(\bmod~q).
\end{align*}
This implies
$\xi^{a_2(i)-a_2(j)}+
\xi^{a_2(i')-a_2(j')} =\xi^{b_2(i)-b_2(j)}+
\xi^{b_2(i')-b_2(j')} = 0.
$
Combining these cases above, we have Eq. (\ref{eq-AAF-1}) holds.
    \item[Step 2]: Eq. (\ref{eq-cro}) can be further expressed as
     \begin{eqnarray}\label{eq-ACF-12}
   \sum_{i=0}^{2^{m}-1-u}\left(\xi^{a_1(i)-a_2(j)}
   +
   \xi^{b_1(i)-b_2(j)}\right)=0,
  \end{eqnarray}
  where
  \begin{align*}
 a_1(i)-a_2(j)= &\frac{q}{2}\sum_{s=1}^{m-2}\left(i_{\pi(s)}
  i_{\pi(s+1)}-j_{\pi(s)}
  j_{\pi(s+1)}\right) \\
  & -\frac{q}{2}j_{\pi(m-2)}-\frac{q}{2}j_{\pi(m-2)}j_{\pi(m)}
  \\&-\frac{q}{2}j_{\pi(m)}
  -\frac{q}{2}j_{\pi(m-1)}j_{\pi(m)}
  \\
     &+\sum_{s=1}^{m}\mu_{s}
  \left(i_{s}-j_{s}\right),\\
  b_1(i)-b_2(j) = & a_1(i)-a_2(j)+\frac{q}{2}\left(i_{\pi(1)}-j_{\pi(1)}\right).
  \end{align*}
  For $u=0$, then, we have $i=j$ and
    \begin{align*}
   &a_1(i)-a_2(i)= b_1(i)-b_2(i)\\=& -\frac{q}{2}i_{\pi(m-2)}-\frac{q}{2}i_{\pi(m-2)}i_{\pi(m)}
  -\frac{q}{2}i_{\pi(m)}\\&-\frac{q}{2}i_{\pi(m-1)}i_{\pi(m)}.
  \end{align*}
  So, Eq. (\ref{eq-ACF-12}) can be reduced as
   \begin{align*}
   &2\sum_{i=0}^{2^{m}-1-u}(\xi^{a_1(\underline{i})-a_2(\underline{i})})
   \end{align*}\begin{align*}
   =&2\left(\sum_{i=0}^{2^{m-1}-1}(-1)^{i_{\pi(m-2)}}+\sum_{i=2^{m-1}}^{2^{m}-1}
   (-1)^{i_{\pi(m)}-i_{\pi(m-1)}}\right)  \\
   =&2(0+0)=0.
  \end{align*}
For $u\in \mathcal{T}_1\bigcup\mathcal{T}_2$, under the same way of the proof of Step 1, we arrive at
\begin{eqnarray}\label{eq-ACF-21}
   \sum_{i=0}^{2^{m}-1-u}\left(\xi^{a_2(i)-a_1(j)}
   +
   \xi^{b_2(i)-b_1(j)}\right)=0.
  \end{eqnarray}
  \end{description}
  Combining the cases above, we can conclude that
    \begin{eqnarray*}
   C_{\mathcal{S}_k}(u)=\rho_{\psi(\mathbf{a}_k)}(u)+\rho_{\psi(\mathbf{b}_k)}(u)=0,\\
   \hbox{for}~|u|\in \mathcal{T}_1\bigcup\mathcal{T}_2,~1\leq k\leq2,
  \end{eqnarray*}
  and    \begin{eqnarray*}
    C_{\mathcal{S}_1,\mathcal{S}_2}(u)=\rho_{\psi(\mathbf{a}_1),\psi(\mathbf{a}_2)}(u)
    +\rho_{\psi(\mathbf{b}_1),\psi(\mathbf{b}_2)}(u)=0,
   \\\hbox{for}~|u|\in \{0\}\bigcup\mathcal{T}_1\bigcup\mathcal{T}_2.
  \end{eqnarray*}
 The proof is completed.
\end{proof}

\bibliographystyle{IEEEtran}
\bibliography{IEEEfull}

\end{document}